\renewcommand{\cite}{\citet}
\theoremstyle{plain}
\newtheorem{theorem}{Theorem}[section]                                          
\newtheorem{proposition}[theorem]{Proposition}                          
\newtheorem{lemma}[theorem]{Lemma}
\newtheorem{corollary}[theorem]{Corollary}
\theoremstyle{definition}
\theoremstyle{remark}
\newtheorem{remark}[theorem]{Remark}
\newcommand\unnumberedfootnote[1]{ %
        \let\temp=\thefootnote %
        \renewcommand{\thefootnote}{}%
        \footnote{#1}%
        \let\thefootnote=\temp%
        \addtocounter{footnote}{-1}}
\renewcommand{\@fnsymbol}[1]{\ensuremath{%
    \ifcase#1\or 1\or 2\or 3\or
    \mathsection\or \mathparagraph\or \|\or 1\or
    2\or 3 \else\@ctrerr\fi}}
\begin{document}

\title{Stochastic gene expression with delay}

\thispagestyle{empty}

\author{\sc by Martin Jansen\thanks{University Medical Center
    Freiburg, Institute of Clinical Chemistry and Laboratory Medicine,
    Hugstetter Stra\ss e 55, 79106 Freiburg, Germany, email:
    martin.jansen@uniklinik-freiburg.de} and Peter
  Pfaffelhuber\thanks{Corresponding author; University of Freiburg,
    Abteilung f\"ur Mathematische Stochastik, Eckerstr. 1, 79104
    Freiburg, Germany, email: p.p@stochastik.uni-freiburg.de}}

\maketitle

\unnumberedfootnote{AMS subject classification: 92C42, 92C40, 60K35}
\unnumberedfootnote{Keywords: Chemical reaction network, Poisson process}





\begin{abstract}
  The expression of genes usually follows a two-step procedure. First,
  a gene (encoded in the genome) is transcribed resulting in a strand
  of (messenger) RNA. Afterwards, the RNA is translated into protein.
  We extend the classical stochastic jump model by adding delays (with
  arbitrary distributions) to transcription and translation.

  Already in the classical model, production of RNA and protein come
  in bursts by activation and deactivation of the gene, resulting in a
  large variance of the number of RNA and proteins in equilibrium. We
  derive precise formulas for this second-order structure with the
  model including delay in equilibrium.


\end{abstract}

\maketitle

\section{Introduction}
The central dogma of molecular biology is that a gene (encoded within
the genome) is transcribed into (messenger) RNA (also abbreviated
mRNA), which in turn is translated into protein, the whole process
also being called gene expression. Mathematical models for this
process have by now been studied for a long time; see e.g.\
\cite{Rigney:1977:J-Theor-Biol:607033}, \cite{Berg1978},
\cite{McAdams:1997:Proc-Natl-Acad-Sci-U-S-A:9023339},
\cite{Swain:2002:Proc-Natl-Acad-Sci-U-S-A:12237400},
\cite{Paulsson2005},
\cite{Cottrell:2012:Proc-Natl-Acad-Sci-U-S-A:22660929},
\cite{Bokes2012}, \cite{Pendar2013}, \cite{Robert2013}.

Within a single cell, gene expression often comes with stochastic
fluctuations; see e.g.\ \cite{Raser2005, Raj2008, Balazsi2011}. There
are either one or two copies of the genome, and only a few genes code
for the same protein. As reviewed by \cite{jackson2000balance} the
majority of expressed RNA species in mammalian cells have less than
10 copies, though there are also RNA species present at an order of
10000 copies. \cite{guptasarma1995does} observed that for 80\% of
genes in {\it E. Coli} genome the copy number of many proteins is less
than 100.  Hence in many cases there are only a small copy numbers of
RNA and protein molecules, making them a noisy (i.e.\ stochastic)
quantity. While this stochasticity has been assumed to be detrimental
to the cellular function, it can also help a cell to adapt to
fluctuating environments, or help to explain genetically homogeneous
but phenotypically heterogeneous cellular populations
\citep{Kaern:2005:Nat-Rev-Genet:15883588}.

In order to consider stochasticity in gene expression,
\cite{Swain:2002:Proc-Natl-Acad-Sci-U-S-A:12237400} distinguish
between \emph{intrinsic} and \emph{extrinsic} noise. The latter
accounts for changing environments of the cell, while the former
accounts for the stochastic process of transcription and
translation. Let us look at the possible sources of intrinsic noise in
more detail; see e.g.\ \cite{Zhu:2007:J-Theor-Biol:17350653},
\cite{Roussel:2006:Bull-Math-Biol:16967259}.
\\
(i) Various mechanisms for gene expression require random events to
occur. In order to understand this let us have a closer look at the
mechanisms of gene expression. Transcription starts when RNA
polymerase (which are enzymes helping in the synthetisis of RNA) binds
to the promoter region of the gene, forming an \emph{elongation
  complex}. This elongation complex is then ready to start walking
along the DNA, reading off DNA and making RNA. Before the transcript
is released, a \emph{ribosome binding site} (which is needed for
translation) is being produced on the transcript. Then follows
translation which starts when a free ribosome binds to the ribosome
binding site of the transcript and again is a complex process
involving many chemical reactions, which lead to fluctuations.
\\
(ii) Another source of noise comes from turning genes on and off. This
means that transcription factors can bind to promoter regions of the
gene and only bound (or unbound) promoters can initiate
transcription. This process has been found to be the most important
source of randomness for gene expression (see e.g.\ \citealp{
  Swain:2002:Proc-Natl-Acad-Sci-U-S-A:12237400,
  kaern2005stochasticity, Zhu:2008:FEBS-Lett:18656472, raj2008nature,
  IyerBiswas:2009:Phys-Rev-E-Stat-Nonlin-Soft-Matter-Phys:19391975}). The
effect of this activation and inactivation of genes is a burst-like
behavior of protein production, already apparent in
\cite{McAdams:1997:Proc-Natl-Acad-Sci-U-S-A:9023339}. When considering
the amount of RNA within the cell during the production of a specific
protein, it is hence not surprising that production of RNA comes in
bursts, which are related to times when the gene is turned on. This
burst-like behavior is inherited to protein formation, which also
comes in bursts during translation.

The classical model of stochasticity in gene expression uses
exponential waiting times between transcription and translation
events, and once produced, RNA and protein molecules are immediately
available to the system. The latter contradicts several biological
facts, valid in prokaryotes as well as in eukaryotes, e.g.: Production
of RNA consists of many enzymatic reactions
\citep{Roussel:2006:Bull-Math-Biol:16967259}.  In the translation
process another set of reactions unbinds RNA from the ribosome.  For
eukaryotes, post-transcriptional modification of RNA and the transport
of RNA out of the nucleus to the ribosomes, as well as folding of
proteins, requires time. Taking such issues into account, it makes
sense to model a (random) time delay before an RNA or protein molecule
can be used by the system. In our paper, we are studying the effect of
(random) delay on the noise in gene expression.
In real-life applications, models for such gene expression delays have
been considered e.g.\ by \cite{Lewis:2003:Curr-Biol:12932323},
\cite{Monk:2003:Curr-Biol:12932324},
\cite{Barrio:2006:PLoS-Comput-Biol:16965175},
\cite{Bratsun:2005:Proc-Natl-Acad-Sci-U-S-A:16199522}.

While our modeling approach only takes a single gene/RNA/protein
triple into account, the field of systems biology aims at unraveling
interactions between genes in so-called pathways. It seems clear that
randomness as well as delays can accumulate in such networks of
interacting genes and proteins. As a simple example, the transcription
factor regulating the expression of gene $A$ is coded by a gene $B$
which in turn may be regulated by gene $A$ (or by itself), which can
lead to a bi-modal distribution of the number of proteins encoded by
gene $A$ or $B$; see e.g.\
\cite{Kaern:2005:Nat-Rev-Genet:15883588}. Although such feedback
systems are highly interesting, we are not touching on this level of
complexity.

Today, delays in biochemical reaction networks also serve as a tool
for model reduction. \cite{Barrio:2013:J-Chem-Phys:23514472} and
\cite{Leier2014} argue that lumping together certain reactions
effectively leads to a delay for other reactions. At least for first
order reactions, they compute the resulting delay times which serve
for a precise model reduction.

~

Simulation of chemical systems, or \emph{in silico modeling}, today
paves the way to understanding complex cellular processes. While the
Gillespie algorithm is a classical approach for stochastic simulations
\citep{Gillespie1977} -- see also the review \cite{Gillespie2013} --
chemical delay models have as well been algorithmically studied.
Various explicit simulation schemes for delay models -- in particular
in the field of stochasticity in gene expression -- have been given;
see \cite{Bratsun:2005:Proc-Natl-Acad-Sci-U-S-A:16199522},
\cite{Roussel:2006:Phys-Biol:17200603},
\cite{Barrio:2006:PLoS-Comput-Biol:16965175},
\cite{Cai:2007:J-Chem-Phys:17411109}, \cite{Tian2007},
\cite{Anderson:2007:J-Chem-Phys:18067349},
\cite{Ribeiro:2010:Math-Biosci:19883665},
\cite{Tian:2013:PLoS-One:23349679}, \cite{zavala2014delays}.

The goal of this paper is to give a quantitative evaluation of delay
in the standard model of stochastic gene expression. We do this in a
general way in which the delay -- both for transcription and
translation -- can have an arbitrary distribution. Although we give a
full description of the stochastic processes of the total number of
RNA and protein molecules, our quantitative results are restricted
since we only address the calculation of the first two moments
(expectation, variance and autocovariances) of the number of RNA and
protein molecules.

~

Outline: In Section~\ref{S:model}, we introduce our delay model using
a classical approach of stochastic time-change equations as well as a
description of the system in equilibrium. Then, we present our main
results in Section~\ref{S:results}. Basically, Theorems~\ref{T1}
and~\ref{T2} give the second-order structure of the number of RNA and
protein in equilibrium under the delay model, respectively. In
Section~\ref{S:ex}, we give several examples (uniformly and
exponentially distributed delay, and delay with small variance). We
end our paper with a discussion and connections to previous work in
Section~\ref{S:discuss}.

\section{The model}
\label{S:model}
In order to be able to model gene expression in a sophisticated way,
we now give our delay model.  Using the terminology from
\cite{Roussel1996}, we may write
\begin{equation}
  \label{eq:model}
  \begin{aligned}
    \operatorname{inactive\; gene} &
    \autorightleftharpoons{$\lambda_1^+$}{$\lambda_1^-$}
    \operatorname{active\; gene}\\
    \operatorname{active\; gene}(t) & \autorightarrow{$\lambda_2$}{}
    \operatorname{active\; gene}(t) + \operatorname{RNA\; transcript}(t+G_2)\\
    \operatorname{RNA\; transcript}(t) &
    \autorightarrow{$\lambda_3$}{}
    \operatorname{RNA\; transcript}(t) + \operatorname{protein}(t+G_3)\\
    \operatorname{RNA\; transcript} & \autorightarrow{$1/\tau_2$}{} \emptyset \\
    \operatorname{protein} & \autorightarrow{$1/\tau_3$}{} \emptyset.
  \end{aligned}
\end{equation}
Essentially, \eqref{eq:model} is an extension of the well-studied
model of gene expression, as e.g.\ given in \cite{Paulsson2005}. Gene
expression of $n_{\text{max}}$ similar genes is studied. Each gene is
activated and deactivated at rates $\lambda_1^+$ and $\lambda_1^-$,
respectively. (Additionally, we will set $\tau_1 =
1/(\lambda_1^++\lambda_1^-)$). Every active gene creates the RNA
transcript at rate $\lambda_2$, which is degraded at rate
$1/\tau_2$. However, a RNA molecule is available for the system
(i.e.\ can be translated) only some random delay time $G_2$ after its
creation, where $G_2$ is an independent random variable with
distribution $\mu$. Then, each RNA transcript available for the system
initiates translation of protein at rate $\lambda_3$ which in turn
degrades at rate $1/\tau_3$. Again, it takes a delay of a random time
$G_3$, distributed according to $\nu$ and independent of everything
else, that the protein molecule is available for the system (i.e.\ for
other downstream processes).  

We note that \eqref{eq:model} is a special case of a model studied in
\cite{Zhu:2007:J-Theor-Biol:17350653}. Since they consider the
ribosome binding site as an own chemical species, their model requires
more delay random variables.  Moreover, they distinguish gene
expression in prokaryotes (bacteria) and eukaryotes (higher
organisms), the main difference being that only eukaryotes have a
cellular core. As a consequence, in prokaryotes translation can
already be initiated when transcription is not complete yet. (The
ribosome can bind to the ribosome binding site while the RNA
transcript is still being produced.)  The simplification (6) and (7)
in \cite{Zhu:2007:J-Theor-Biol:17350653} for gene expression in
prokaryotes (both, the time the promoter region of the gene is
occupied and the time the ribosome binds to the ribosome binding site
are negligible), are in line with \eqref{eq:model} for $G_2=0$. In
addition, for the same simplification in eukaryotes (see their
equation (5), where the ribosome binding site is available for binding
to the ribosome only some time after the promoter was released), we
exactly recover \eqref{eq:model} for general $G_2$.

~

The question we ask is about the equilibrium behavior of the number of
available RNA molecules and proteins. We restrict our study to
$n_{\text{max}}=1$, because all genes are independent. We define, with
$t\in\mathbb R$,
\begin{align*}
  \mathcal N_1(t) & = \text{number of active genes (either 0 or 1) at time $t$},\\
  \mathcal N_2(t) & = \text{number of available RNA at time $t$},\\
  \mathcal N_3(t) & = \text{number of available proteins at time $t$}.
\end{align*}
Putting \eqref{eq:model} into well-established time-change equations
\citep{AndersonKurtz2011}, in equilibrium we get
\begin{equation}
  \label{eq:model1}
  \begin{aligned}
    \mathcal N_1(t) & = \mathcal P_1\Big( \int_{-\infty} ^t
    \lambda_1^+ 1_{\mathcal N_1(s)=0} ds\Big) - \mathcal P_2\Big(
    \int_{-\infty} ^t \lambda_1^- 1_{\mathcal N_1(s)=1} ds\Big),\\
    \mathcal N_2(t) & = \mathcal P_3\Big( \int_{-\infty} ^t
    \int_0^\infty \lambda_2 \mathcal N_1(s-\delta) \mu(d\delta)
    ds\Big) - \mathcal P_4\Big( \int_{-\infty} ^t
    \frac{1}{\tau_2} \mathcal N_2(s) ds\Big),\\
    \mathcal N_3(t) & = \mathcal P_5\Big( \int_{-\infty} ^t
    \int_0^\infty \lambda_3 \mathcal N_2(s-\delta') \nu(d\delta')
    ds\Big) - \mathcal P_6\Big( \int_{-\infty} ^t \frac{1}{\tau_3}
    \mathcal N_3(s) ds\Big)
  \end{aligned}
\end{equation}
for independent, unit rate Poisson processes $\mathcal
P_1,...,\mathcal P_6$. (Here, for the measure $\mu$, we use the
standard notation $\mu(d\delta)$ for the mass the measure $\mu$ puts
on the small interval $d\delta$.) Formally, we need to write integrals
$\int_{-T}^tds$ and then obtain the equilibrium by letting
$T\to\infty$. Since we start the process at time $-\infty$, the
initial state is of no relevance due to recurrence of the
process. Also note that \cite{AndersonKurtz2011} indicate that such
delay time-change equations do have a unique solution which is shown
by the same jump by jump argument as for models without delay.
Another description gives the distribution of RNA and protein in
equilibrium. If the gene is active, only some random delay time
$G_2\geq 0$ with distribution $\mu$ later, the RNA is
available. Therefore,
\begin{equation}
  \label{eq:1}
  \begin{aligned}
    \mathcal{N}_2(t) & \sim
    Poi\left\{\lambda_2\int_{0}^{\infty}e^{-\frac{r}{\tau_2}}\int_{0}^{\infty}
      \mathcal{N}_1(t-r-\delta)\mu(d\delta) dr\right\} \\ & =
    Poi\left\{\lambda_2 \tau_2 \int_{0}^{\infty} \mathcal{N}_1(t-s)
      \Big(\exp\Big(\frac{1}{\tau_2}\Big)\ast \mu\Big)(ds)\right\},
  \end{aligned}
\end{equation}
where $\exp(1/\tau)$ is the exponential distribution with expectation
$\tau$ and $\ast$ denotes the convolution of measures.  Indeed, every
RNA available at time $t$ was produced at some time $t-r$, which only
works if at time $t-r-G_2$ (where $G_2\sim\mu$), the gene was
active. In addition, the RNA must not be degraded during time $r$,
which happens with probability $e^{-r/\tau_2}$. Using the same kind of
arguments, we set for another delay $G_3\geq 0$ with distribution
$\nu$
\begin{equation}
  \label{eq:2}
  \begin{aligned}
    \mathcal{N}_3(t)\sim
    &Poi\left\{\lambda_3\int_{0}^{\infty}e^{-\frac{r}{\tau_3}}
      \int_{0}^{\infty}\mathcal{N}_2(t-r-\delta)\nu(d\delta)
      dr\right\} \\ & = Poi\left\{\lambda_3 \tau_3 \int_{0}^{\infty}
      \mathcal{N}_2(t-s) \Big(\exp\Big(\frac{1}{\tau_3}\Big)\ast
      \nu\Big)(ds)\right\}.
  \end{aligned}
\end{equation}
Frequently (see e.g.\ \citealp{Paulsson2005,Bokes2012},) stochasticity
in gene expression is studied through the Master equation. We stress
that the usual Master equation is unsuitable to be used for an
arbitrary delay of RNA and protein production. The reason is simply
that by a non-exponentially distributed delay, the process $(\mathcal
N_1, \mathcal N_2, \mathcal N_3)$ is not a Markov process; however,
compare with~\cite{Tian2007} where an extension of the Master equation
for deterministic delay models is given.  Note that our Poisson point
process approach is similar in spirit to \cite{Robert2013}, who model
an arbitrary (non-exponential) life-time distribution of RNA and
protein using point processes but without delay.

We remark that our modeling is unrealistic at least in one respect:
considering two RNA molecules, created at times $s$ and $s'$ with
$s<s'$, both will be available for translation by times $t = s + G_2$
and $t' = s' + G_2'$. However, since the delay for both RNA molecules
is given through independent $G_2$ and $G_2'$, both having
distribution $\mu$, it might be that $t>t'$.

From~\eqref{eq:model1}, \eqref{eq:1} and \eqref{eq:2}, we already
obtain the equilibrium expectations and second moments in
Proposition~\ref{P1}, Theorem~\ref{T1} and~\ref{T2}. We will see that
expectations are independent of the delay distributions, $\mu$ and
$\nu$. This situation will change for the second-order structure.

\section{Results}
\label{S:results}
\noindent
We are now ready to formulate and prove our results on the number of
active genes (Proposition~\ref{P1}), the amount of RNA molecules
(Theorem~\ref{T1}) and the number of available protein
(Theorem~\ref{T2}). We start with first and second moments of the
active genes.

\begin{proposition}[First and second order structure of $\mathcal
  N_1$]
  The \label{P1} expectation, variance and covariance of $\mathcal N_1$
  in equilibrium are given by
  \begin{align}\label{eq:4}
    \mathbf E[\mathcal N_1(t)] & = \lambda_1^+
    \tau_1, \\
    \label{eq:5}\mathbf{Cov}[\mathcal N_1(0),\mathcal N_1(t)] & =
    e^{-\frac{|t|}{\tau_1}} \lambda_1^+ \lambda_1^-
    \tau_1^2,\\
    \label{eq:6}\mathbf{Var}[\mathcal N_1(t)] & =
    \lambda_1^+ \lambda_1^- \tau_1^2.
  \end{align}
\end{proposition}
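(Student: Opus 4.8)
The plan is to recognize $\mathcal{N}_1$ as a stationary continuous-time Markov jump process on $\{0,1\}$ — a two-state telegraph process — and to compute its first two moments directly from this structure. The process switches from state $0$ to state $1$ at rate $\lambda_1^+$ and from $1$ to $0$ at rate $\lambda_1^-$, so its stationary distribution is the well-known $\mathbf{P}[\mathcal{N}_1 = 1] = \lambda_1^+/(\lambda_1^+ + \lambda_1^-) = \lambda_1^+ \tau_1$, which immediately yields the expectation in \eqref{eq:4}. Since $\mathcal{N}_1$ is $\{0,1\}$-valued, its variance is $p(1-p)$ with $p = \lambda_1^+\tau_1$, and a short simplification using $1 - p = \lambda_1^-\tau_1$ gives \eqref{eq:6}.

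For the covariance, I would first establish the stationarity of $\mathcal{N}_1$ under the equilibrium specified by \eqref{eq:model1}; because the driving Poisson processes are stationary and the process is started at $-\infty$, time-translation invariance holds and it suffices to compute $\mathbf{E}[\mathcal{N}_1(0)\mathcal{N}_1(t)]$ for $t \geq 0$, with the $|t|$ following by symmetry. The key computation is the transition probability $\mathbf{P}[\mathcal{N}_1(t) = 1 \mid \mathcal{N}_1(0) = 1]$, which I would obtain either by solving the $2 \times 2$ generator's forward equations or, more quickly, by noting that the generator has eigenvalues $0$ and $-(\lambda_1^+ + \lambda_1^-) = -1/\tau_1$, so every transition probability relaxes to equilibrium as $e^{-t/\tau_1}$. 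Concretely,
\begin{equation*}
\mathbf{P}[\mathcal{N}_1(t) = 1 \mid \mathcal{N}_1(0) = 1] = \lambda_1^+ \tau_1 + \lambda_1^- \tau_1 \, e^{-t/\tau_1}.
\end{equation*}

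From here the covariance follows mechanically: $\mathbf{E}[\mathcal{N}_1(0)\mathcal{N}_1(t)] = \mathbf{P}[\mathcal{N}_1(0)=1]\,\mathbf{P}[\mathcal{N}_1(t)=1 \mid \mathcal{N}_1(0)=1]$, and subtracting the product of means $(\lambda_1^+\tau_1)^2$ leaves $\lambda_1^+\tau_1 \cdot \lambda_1^-\tau_1 \, e^{-t/\tau_1} = \lambda_1^+\lambda_1^-\tau_1^2\, e^{-|t|/\tau_1}$, matching \eqref{eq:5}. I expect the main obstacle to be bookkeeping rather than conceptual: correctly identifying the relaxation rate $1/\tau_1$ from the generator and carefully verifying that the $-\infty$ initialization in \eqref{eq:model1} indeed delivers a genuinely stationary telegraph process (so that the one-dimensional marginals are the stationary distribution and the two-time law depends only on the time lag). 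An alternative, fully self-contained route avoids invoking Markov machinery by working directly from \eqref{eq:model1}: writing $\mathcal{N}_1(t) - \mathcal{N}_1(0)$ as a difference of the two Poisson integrals, taking conditional expectations, and solving the resulting linear ODE for $f(t) = \mathbf{E}[\mathcal{N}_1(t) \mid \mathcal{N}_1(0)]$; this reproduces the same exponential relaxation and would be the cleaner presentation if one wishes to stay within the time-change formalism of the paper.
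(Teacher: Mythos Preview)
Your proposal is correct and is precisely the standard telegraph-process computation the paper has in mind; in fact the paper's proof simply asserts that ``the assertion follows by a simple calculation using Poisson processes'' and omits all details, so your argument supplies exactly what the authors left out.
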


\begin{proof}
  Since activation and de-actication of the gene is independent of
  downstream processes, the assertion follows by a simple calculation
  using Poisson processes. We omit the details.
\end{proof}

\noindent
Next, we derive our results for the amount of RNA.

\begin{theorem}[Expectation, variance and covariance of RNA in
  equilibrium]
  The \label{T1} expectation, variance and covariance of $\mathcal
  N_2$ in equilibrium are given by
  \begin{align}
    \label{eq:T10}
    \mathbf E[\mathcal N_2(t)]  & = \lambda_1^+ \lambda_2 \tau_1 \tau_2,\\
    \mathbf{Cov}[ \mathcal N_2(0), \mathcal N_2(t)] & =
    \label{eq:T11} 
    \lambda_1^+ \lambda_2\tau_1\tau_2 e^{-\frac{|t|}{\tau_2}} \\ &
    \notag + \lambda_1^+ \lambda_1^- \lambda_2^2 \tau_1^2
    \tau_2^2 \frac{\tau_1}{\tau_1+\tau_2} \frac{\tau_1 \mathbf
      E\Big[e^{-\frac{|G_2'-G_2-|t||}{\tau_1}}\Big] - \tau_2 \mathbf E
      \Big[e^{-\frac{|G_2'-G_2-|t||}{\tau_2}}\Big]}{\tau_1-\tau_2},\\
    \label{eq:T12} \mathbf{Var}[\mathcal N_2(t)] & = \lambda_1^+
    \lambda_2\tau_1\tau_2 \\ & \qquad + \lambda_1^+ \lambda_1^-
    \lambda_2^2 \tau_1^2 \tau_2^2 \frac{\tau_1}{\tau_1+\tau_2} \cdot
    \frac{\tau_1 \mathbf E\Big[e^{-\frac{|G_2'-G_2|}{\tau_1}}\Big] -
      \tau_2 \mathbf E \notag
      \Big[e^{-\frac{|G_2'-G_2|}{\tau_2}}\Big]}{\tau_1-\tau_2},
  \end{align}
  where $G_2, G'_2$ are independent and $G_2, G'_2\sim\mu$.
\end{theorem}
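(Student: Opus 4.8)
The plan is to condition on the entire trajectory of the gene process $\mathcal N_1$ and to exploit the representation behind~\eqref{eq:1}, namely that given $\mathcal N_1$ the available RNA molecules form a Cox process: an RNA molecule becomes available at time $u$ at rate $\lambda_2\int_0^\infty \mathcal N_1(u-\delta)\mu(d\delta)$ and then carries an independent $\exp(1/\tau_2)$ lifetime, so that $\mathcal N_2(s)$ counts the marked points $(u,L)$ alive at time $s$, i.e.\ those with $u\le s\le u+L$. The expectation~\eqref{eq:T10} follows at once by taking the expectation of the conditional mean $\Lambda_2(t):=\lambda_2\int_0^\infty e^{-r/\tau_2}\int_0^\infty \mathcal N_1(t-r-\delta)\mu(d\delta)\,dr$ and inserting $\mathbf E[\mathcal N_1]=\lambda_1^+\tau_1$ from~\eqref{eq:4}, whereupon the $\mu$-integral and the $r$-integral factor out to give $\lambda_1^+\lambda_2\tau_1\tau_2$, independent of $\mu$.

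For the covariance I would use the law of total covariance,
\[
\mathbf{Cov}[\mathcal N_2(0),\mathcal N_2(t)] = \mathbf E\big[\mathbf{Cov}(\mathcal N_2(0),\mathcal N_2(t)\mid \mathcal N_1)\big] + \mathbf{Cov}\big[\Lambda_2(0),\Lambda_2(t)\big].
\]
For the first (conditional) term, the Poisson property of a Cox process gives that the covariance of the counts on two marked-point sets equals the intensity of their intersection; for $t>0$ the molecules alive at both $0$ and $t$ are exactly those with arrival $u\le 0$ and lifetime $L\ge t-u$, whose conditional intensity is $\int_{-\infty}^0\lambda_2\int_0^\infty\mathcal N_1(u-\delta)\mu(d\delta)\,e^{-(t-u)/\tau_2}\,du$. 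Taking expectations and again using $\mathbf E[\mathcal N_1]=\lambda_1^+\tau_1$ collapses this to $\lambda_1^+\lambda_2\tau_1\tau_2\,e^{-|t|/\tau_2}$, the first line of~\eqref{eq:T11}.

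The second term is where the delay distribution enters. Writing $\mathbf{Cov}[\Lambda_2(0),\Lambda_2(t)]$ as a double integral over $(r,r')$ against $\mu\otimes\mu$ of the gene covariance and substituting~\eqref{eq:5}, after pulling out $\lambda_1^+\lambda_1^-\lambda_2^2\tau_1^2$ one is left with the kernel $\mathbf E\big[\int_0^\infty\int_0^\infty e^{-(r+r')/\tau_2}e^{-|(r'-r)+(G_2'-G_2)-t|/\tau_1}\,dr\,dr'\big]$ for independent $G_2,G_2'\sim\mu$. The change of variables $v=r+r'$, $w=r'-r$ integrates out $v$ and reduces the inner double integral to the one-dimensional two-sided exponential convolution $\frac{\tau_2}{2}\int_{-\infty}^\infty e^{-|w|/\tau_2}e^{-|w+c|/\tau_1}\,dw$ with $c=(G_2'-G_2)-t$, whose standard evaluation produces precisely the combination $\frac{\tau_1 e^{-|c|/\tau_1}-\tau_2 e^{-|c|/\tau_2}}{\tau_1-\tau_2}$ together with the prefactor $\tfrac{\tau_1}{\tau_1+\tau_2}$ appearing in~\eqref{eq:T11}. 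Taking the expectation over $G_2,G_2'$ and using that $G_2'-G_2$ is symmetric about $0$ (so the result depends on $t$ only through $|t|$) gives~\eqref{eq:T11}; setting $t=0$ yields the variance~\eqref{eq:T12}.

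I expect the main obstacle to be the rigorous justification of the joint two-time conditional law in the first covariance term: whereas~\eqref{eq:1} supplies only the one-time marginal, the covariance requires that, given $\mathcal N_1$, the pair $(\mathcal N_2(0),\mathcal N_2(t))$ genuinely arises as the counts of a single underlying Poisson process of molecules sharing one exponential clock, so that molecules alive at both times are accounted for through the intersection of the two event sets. Once this representation is secured, the remaining work is the convolution integral, where the only delicate point is the removable singularity at $\tau_1=\tau_2$, for which the stated quotient is to be read as its limit.
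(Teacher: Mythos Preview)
Your proposal is correct and follows the same overall strategy as the paper: condition on $\mathcal N_1$, split the covariance into a conditional-Poisson part (yielding $\lambda_1^+\lambda_2\tau_1\tau_2\,e^{-|t|/\tau_2}$) and the covariance of the conditional means $\Lambda_2$, then reduce the latter to a one-dimensional convolution against $e^{-|\cdot|/\tau_1}$. The differences are in packaging only. Where you invoke the Cox-process fact that the conditional covariance equals the expected count in the intersection of the two survival sets, the paper instead writes $\mathcal N_2(0)=X+Y$ and $\mathcal N_2(t)=X+Z$ with $X,Y,Z$ conditionally independent Poisson (molecules surviving to $t$, degraded by $t$, newly available after $0$) and expands $\mathbf{Cov}[X+Y,X+Z]$ into five explicit terms; this is the same computation with more bookkeeping, and incidentally resolves the ``main obstacle'' you flag by exhibiting the joint conditional law directly rather than appealing to the abstract Cox-process property. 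Where you evaluate the Laplace--Laplace convolution $\int_{\mathbb R} e^{-|w|/\tau_2}e^{-|w+c|/\tau_1}\,dw$ by calculus after the change of variables $v=r+r'$, $w=r'-r$, the paper instead recognizes the double $dr\,dr'$ integral as $\tau_2^2\,\mathbf E\bigl[e^{-|t+T_2+G_2-T_2'-G_2'|/\tau_1}\bigr]$ for independent $T_2,T_2'\sim\exp(1/\tau_2)$ and removes $T_2,T_2'$ via a separately stated key lemma. Your route is slightly more elementary and self-contained; the paper's probabilistic packaging has the advantage that the same key lemma is reused verbatim in the protein variance computation.
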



\begin{remark}[Convexity and deterministic delay]
  We note that the last terms in \eqref{eq:T11} and \eqref{eq:T12}
  stay bounded, even for $\tau_1\to\tau_2$. Moreover, for $X\geq 0$
  the map $t\mapsto \mathbf E[e^{-Xt}]$ is convex. Hence, assuming
  $\tau_1>\tau_2$ without loss of generality, we obtain, using the
  convex combination $1 = \frac{\tau_1}{\tau_1-\tau_2} -
  \frac{\tau_2}{\tau_1-\tau_2}$ (and noting that
  $\frac{\tau_1}{\tau_1-\tau_2}>1$) that for a random variable $S$
  \begin{equation}
    \label{eq:convex}
    \begin{aligned}
      f_S(\tau_1, \tau_2) & := \frac{\tau_1 \mathbf
        E\Big[e^{-\frac{S}{\tau_1}}\Big] - \tau_2 \mathbf E
        \Big[e^{-\frac{S}{\tau_2}}\Big]}{\tau_1-\tau_2} \\
      & \qquad \leq \mathbf E\Big[ \exp\Big(-\Big( \frac{1}{\tau_1}
      \frac{\tau_1}{\tau_1-\tau_2}-\frac{1}{\tau_2}
      \frac{\tau_2}{\tau_1-\tau_2}\Big)S\Big)\Big] = 1
    \end{aligned}
  \end{equation}
  with equality if and only if $\mathbf V[S] = 0$. In particular we
  see that
  \begin{align*}
    \mathbf{Var}[\mathcal N_2(t)] \leq \lambda_1^+
    \lambda_2\tau_1\tau_2 + \lambda_1^+ \lambda_1^- \lambda_2^2
    \tau_1^2 \tau_2^2 \frac{\tau_1}{\tau_1+\tau_2}    
  \end{align*}
  with equality if and only if $\mu$ is a delta-measure.  In
  particular, we see that stochastic delay leads to a decrease in the
  variance for $\mathcal N_2$ in equilibrium. For a deterministic
  delay, we obtain
  \begin{align*}
    \frac{\mathbf{Var}[\mathcal N_2(t)]}{\mathbf E[\mathcal N_2(t)]^2}
    & = \frac{1}{\mathbf E[\mathcal N_2(t)]} +
    \frac{\lambda_1^+\lambda_1^-\lambda_2^2\tau_1^3\tau_2^2}{(\tau_1+\tau_2)\lambda_2^2
      (\lambda_1^+)^2\tau_1^2\tau_2^2} = \frac{1}{\mathbf E[N_2]} +
    \frac{\lambda_1^-}{\lambda_1^+}\frac{\tau_1}{\tau_1 + \tau_2},
  \end{align*}
  which equals the numerical value in the absence of delay, $\mu =
  \delta_0$; see equation~(5) in~\cite{Paulsson2005}.
\end{remark}

\begin{proof}[Proof of Theorem~\ref{T1}]
  Using~\eqref{eq:1} and Proposition~\ref{P1}, we write
  \begin{align*}
    \mathbf E[\mathcal N_2(t)] & = \mathbf E\left[\lambda_2 \tau_2
      \int_{0}^{\infty} \mathcal{N}_1(t-s) \Big(
      \exp\Big(\frac{1}{\tau_2}\Big)\ast \mu\Big)(ds)\right] \\ & =
    \lambda_2 \tau_2 \int_0^\infty \mathbf E[\mathcal{N}_1(t-s)]\Big(
    \exp\Big(\frac{1}{\tau_2}\Big)\ast \mu\Big)(ds) \\ & = \lambda_2
    \tau_2\lambda_1^+\tau_1.
  \end{align*}
  For~\eqref{eq:T11} and~\eqref{eq:T12}, it clearly suffices to
  prove~\eqref{eq:T11}, since the variance formula just requires to
  take $t=0$.  From our model, we can write in the case $t\geq 0$
  (compare with the explanation given below~\eqref{eq:1})
  \begin{align*}
    \mathcal N_2(0) & = X+Y,\\
    \mathcal N_2(t) & = X+Z,
  \end{align*}
  where $X, Y$ and $Z$ are conditionally independent given $\mathcal
  N_1$ such that
  \begin{align*}
    X & \sim
    Poi\left\{\lambda_2\int_{0}^{\infty}e^{-\frac{r+t}{\tau_2}}\int_{0}^{\infty}
      \mathcal{N}_1(-r-\delta)\mu(d\delta) dr\right\} \\ & =
    Poi\left\{\lambda_2 \tau_2 e^{-\frac{t}{\tau_2}} \int_{0}^{\infty}
      \mathcal{N}_1(-s)\Big( \exp\Big(\frac{1}{\tau_2}\Big)\ast
      \mu\Big)(ds) \right\} \intertext{is the number of RNAs available
      by time $0$ which will not be degraded by time $t$, } Y & \sim
    Poi\left\{\lambda_2\int_{0}^{\infty}e^{-\frac{r}{\tau_2}}(1-e^{-\frac{t}{\tau_2}})
      \int_{0}^{\infty} \mathcal{N}_1(-r-\delta)\mu(d\delta)
      dr\right\} \\ & = Poi\left\{\lambda_2 \tau_2
      \big(1-e^{-\frac{t}{\tau_2}}\big) \int_{0}^{\infty}
      \mathcal{N}_1(-s)\Big( \exp\Big(\frac{1}{\tau_2}\Big)\ast
      \mu\Big)(ds) \right\} \intertext{is the number of RNAs available
      by time $0$ which will be degraded by time $t$, } Z & \sim
    Poi\left\{\lambda_2\int_{0}^{t}e^{-\frac{r}{\tau_2}}\int_{0}^{\infty}
      \mathcal{N}_1(t-r-\delta)\mu(d\delta) dr\right\} \\ & =
    Poi\left\{\lambda_2\int_{0}^{\infty }e^{-\frac{r}{\tau_2}}1_{r\leq
        t}\int_{0}^{\infty} \mathcal{N}_1(t-r-\delta)\mu(d\delta)
      dr\right\} \intertext{is the number of RNAs available only
      after time $0$ and present by time $t$.}
  \end{align*}
  Hence, we get
  \begin{align*}
    \mathbf{Cov}[ & \mathcal N_2(0), \mathcal N_2(t)] =
    \mathbf{Cov}[X+Y, X+Z] \\ & = \mathbf E\big[
    \mathbf{Var}[X|\mathcal N_1]\big] + \mathbf{Var}\big[ \mathbf
    E[X|\mathcal N_1]\big] + \mathbf{Cov}\big[\mathbf E[X|\mathcal
    N_1],\mathbf E[Z|\mathcal N_1]\big] \\ & \qquad \qquad \qquad
    \qquad + \mathbf{Cov}\big[\mathbf E[X|\mathcal N_1],\mathbf
    E[Y|\mathcal N_1]\big] + \mathbf{Cov}\big[\mathbf E[Y|\mathcal
    N_1],\mathbf E[Z|\mathcal N_1]\big] \\ & = \mathbf E\Big[\lambda_2
    \tau_2 e^{-\frac{t}{\tau_2}} \int_{0}^{\infty}
    \mathcal{N}_1(-s)\Big( \exp\Big(\frac{1}{\tau_2}\Big)\ast
    \mu\Big)(ds) \Big] \\ & \qquad + \mathbf{Var}\Big[\lambda_2 \tau_2
    e^{-\frac{t}{\tau_2}} \int_{0}^{\infty} \mathcal{N}_1(-s)\Big(
    \exp\Big(\frac{1}{\tau_2}\Big)\ast \mu\Big)(ds) \Big] \\ & \qquad
    + \mathbf{Cov}\Big[\lambda_2 \tau_2 e^{-\frac{t}{\tau_2}}
    \int_{0}^{\infty} \mathcal{N}_1(-s)\Big(
    \exp\Big(\frac{1}{\tau_2}\Big)\ast \mu\Big)(ds), \\ & \qquad
    \qquad \qquad \qquad \qquad \qquad \qquad \qquad
    \lambda_2\int_{0}^{\infty}e^{-\frac{r}{\tau_2}}1_{r\leq
      t}\int_{0}^{\infty} \mathcal{N}_1(t-r-\delta)\mu(d\delta)
    dr\Big] \\ & \qquad + \mathbf{Cov}\Big[\lambda_2 \tau_2
    e^{-\frac{t}{\tau_2}} \int_{0}^{\infty} \mathcal{N}_1(-s)\Big(
    \exp\Big(\frac{1}{\tau_2}\Big)\ast \mu\Big)(ds),\\ & \qquad \qquad
    \qquad \qquad \qquad \qquad \lambda_2 \tau_2
    \big(1-e^{-\frac{t}{\tau_2}}\big) \int_{0}^{\infty}
    \mathcal{N}_1(-s)\Big( \exp\Big(\frac{1}{\tau_2}\Big)\ast
    \mu\Big)(ds)\Big] \\ & \qquad + \mathbf{Cov}\Big[\lambda_2 \tau_2
    \big(1-e^{-\frac{t}{\tau_2}}\big) \int_{0}^{\infty}
    \mathcal{N}_1(-s)\Big( \exp\Big(\frac{1}{\tau_2}\Big)\ast
    \mu\Big)(ds), \\ & \qquad \qquad \qquad \qquad \qquad \qquad
    \qquad \lambda_2\int_{0}^\infty e^{-\frac{r}{\tau_2}}1_{r\leq
      t}\int_{0}^{\infty} \mathcal{N}_1(t-r-\delta)\mu(d\delta)
    dr\Big]\\ & = : A_1 + A_2 + A_3 + A_4 + A_5.
  \end{align*}
  We treat the terms separately and write, using Proposition~\ref{P1},
  for independent, $\exp(1/\tau_2)$-distributed random variables $T_2,
  T_2'$, {\allowdisplaybreaks
    \begin{align*}
      A_1 & =
      \lambda_1^+  \lambda_2\tau_1\tau_2 e^{-\frac{t}{\tau_2}},\\
      A_2 + A_4 & = \lambda_2^2 \tau_2^2\big(e^{-\frac{2t}{\tau_2}} +
      e^{-\frac{t}{\tau_2}}\big(1-e^{-\frac{t}{\tau_2}}\big)
      \int_0^\infty\int_0^\infty \mathbf{Cov}[\mathcal N_1(-s),
      \mathcal N_1(-r)] \\ & \qquad \qquad \qquad \qquad \qquad \qquad
      \qquad \Big( \exp\Big(\frac{1}{\tau_2}\Big)\ast
      \mu\Big)(ds)\Big( \exp\Big(\frac{1}{\tau_2}\Big)\ast
      \mu\Big)(dr) \\ & = \lambda_1^+ \lambda_1^- \lambda_2^2 \tau_1^2
      \tau_2^2 e^{-\frac{t}{\tau_2}} \iint e^{-\frac{|s-r|}{\tau_1}}
      \Big( \exp\Big(\frac{1}{\tau_2}\Big)\ast \mu\Big)(ds)\Big(
      \exp\Big(\frac{1}{\tau_2}\Big)\ast \mu\Big)(dr) \\ & =
      \lambda_1^+ \lambda_1^- \lambda_2^2 \tau_1^2 \tau_2^2
      e^{-\frac{t}{\tau_2}}
      \mathbf E\Big[e^{-\frac{|T_2 + G - T_2'-G'|}{\tau_1}}\Big],\\
      A_3 + A_5 & = \lambda_2^2 \tau_2^2 \iiint 1_{r\leq t}
      \mathbf{Cov}[\mathcal N_1(-s), \mathcal N_1(t-r-\delta)]
      \mu(d\delta)\exp\Big(\frac{1}{\tau_2}\Big)(dr)\\ & \qquad \qquad
      \qquad \qquad \qquad \qquad \qquad \qquad \qquad \qquad \qquad
      \Big( \exp\Big(\frac{1}{\tau_2}\Big)\ast \mu\Big)(ds) \\ & =
      \lambda_1^+ \lambda_1^-\lambda_2^2 \tau_1^2 \tau_2^2 \iiint
      1_{r\leq t} e^{-\frac{|t+s-r-\delta|}{\tau_1}} \mu (d\delta)
      \exp\Big( \frac{1}{\tau_2}\Big)(dr) \\ & \qquad \qquad \qquad
      \qquad \qquad \qquad \qquad \qquad \qquad \qquad \qquad\Big(
      \exp\Big(\frac{1}{\tau_2}\Big)\ast \mu\Big)(ds) \\ & =
      \lambda_1^+ \lambda_1^-\lambda_2^2 \tau_1^2 \tau_2^2 \mathbf
      E\Big[e^{-\frac{|t+T_2+G-T_2'-G'|}{\tau_1}}, T_2'\leq t\Big] \\
      & = \lambda_1^+ \lambda_1^-\lambda_2^2 \tau_1^2 \tau_2^2
      \Big(\mathbf E\Big[e^{-\frac{|t+T_2+G-T_2'-G'|}{\tau_1}}\Big] -
      \mathbf E\Big[e^{-\frac{|t+T_2+G-T_2'-G'|}{\tau_1}}\Big| T_2'>
      t\Big] e^{-\frac{t}{\tau_2}}\Big)\\ & = \lambda_1^+
      \lambda_1^-\lambda_2^2 \tau_1^2 \tau_2^2 \Big(\mathbf
      E\Big[e^{-\frac{|t+T_2+G-T_2'-G'|}{\tau_1}}\Big] - \mathbf
      E\Big[e^{-\frac{|T_2+G-T_2'-G'|}{\tau_1}}\Big]
      e^{-\frac{t}{\tau_2}}\Big)
    \end{align*}
  }because, given $T_2'>t$, the random variable $T_2'-t$ is
  exp$\Big(\frac{1}{\tau_2}\Big)$-distributed. Altogether,
  \begin{align}
    \label{eq:T22b} &\mathbf{Cov}[ \mathcal N_2(0), \mathcal N_2(t)] =
    \lambda_1^+ \lambda_2\tau_1\tau_2 e^{-\frac{|t|}{\tau_2}} +
    \lambda_1^+ \lambda_1^- \lambda_2^2 \tau_1^2 \tau_2^2 \mathbf
    E\Big[ e^{-\frac{||t|+T_2+G-T_2'-G'|}{\tau_1}}\Big].
  \end{align}
  Now, \eqref{eq:T11} follows since by Lemma~\ref{l:key},
  \begin{align*}
    \mathbf E\Big[ & e^{-\frac{||t|+T_2+G-T_2'-G'|}{\tau_1}}\Big] \\ &
    = \frac{\tau_1}{(\tau_1+\tau_2)(\tau_1-\tau_2)} \Big(\tau_1
    \mathbf E\Big[e^{-\frac{|G_2'-G_2-|t||}{\tau_1}}\Big] - \tau_2
    \mathbf E\Big[e^{-\frac{|G_2'-G_2-|t||}{\tau_2}}\Big]\Big).
  \end{align*}
\end{proof}

\bigskip

\begin{theorem}[Expectation and variance of protein in equilibrium]
  The expectation and \label{T2} variance of $\mathcal N_3$ in
  equilibrium are given by
  \begin{align}
    \label{eq:T20}
    & \mathbf E[\mathcal N_3(t)] = \lambda_1^+\lambda_2\lambda_3
    \tau_1 \tau_2
    \tau_3,\\
    \label{eq:T21}
    & \mathbf{Var}[\mathcal N_3(t)] = \lambda_1^+ \lambda_2 \lambda_3
    \tau_1\tau_2 \tau_3 + \lambda_1^+ \lambda_2 \lambda_3^2 \tau_1
    \tau_2 \tau_3^2 \cdot A + \lambda_1^+ \lambda_1^- \lambda_2^2
    \lambda_3^2 \tau_1^2
    \tau_2^2 \tau_3^2 \cdot B,\\
    \label{eq:T22}
    & A = \frac{\tau_2}{\tau_2 + \tau_3} \frac{\tau_2 \mathbf
      E\Big[e^{-\frac{|G_3 - G_3'|}{\tau_2}}\Big]
      - \tau_3 \mathbf E\Big[e^{-\frac{|G_3 - G_3'|}{\tau_3}}\Big]}{\tau_2 - \tau_3}\\
    \label{eq:T23}
    & B = 
    \frac{\tau_1}{\tau_1 + \tau_3} \Big( \frac{\tau_1^2}{\tau_1 +
      \tau_2} \frac{\tau_1 \mathbf E\Big[e^{-\frac{|G_3-G_3' + G_2 -
          G_2'|}{\tau_1}}\Big] - \tau_2\mathbf
      E\Big[e^{-\frac{|G_3-G_3' + G_2 -
          G_2'|}{\tau_2}}\Big]}{(\tau_1 - \tau_3)(\tau_1 - \tau_2)} \\
    \notag & \qquad \qquad \qquad \qquad - \frac{\tau_3^2}{\tau_2 +
      \tau_3} \frac{\tau_2 \mathbf E\Big[e^{-\frac{|G_3-G_3' + G_2 -
          G_2'|}{\tau_2}}\Big] - \tau_3\mathbf
      E\Big[e^{-\frac{|G_3-G_3' + G_2 - G_2'|}{\tau_3}}\Big]}{(\tau_1
      - \tau_3)(\tau_2 - \tau_3)} \Big),
  \end{align}
  where $G_2, G_2', G_3, G_3'$ and independent and $G_2, G_2' \sim
  \mu$ and $G_3, G_3'\sim\nu$.
\end{theorem}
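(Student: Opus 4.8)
The plan is to run the argument of Theorem~\ref{T1} one level higher, now using \eqref{eq:2} to represent $\mathcal N_3$ as a Poisson integral driven by $\mathcal N_2$, and feeding in the second-order structure of $\mathcal N_2$ that Theorem~\ref{T1} already supplies. For the expectation I take $\mathbf E$ in \eqref{eq:2}, move it inside the integral, and insert $\mathbf E[\mathcal N_2(t-s)] = \lambda_1^+\lambda_2\tau_1\tau_2$ from \eqref{eq:T10}; since $\exp(1/\tau_3)\ast\nu$ is a probability measure this gives $\mathbf E[\mathcal N_3(t)] = \lambda_3\tau_3\cdot\lambda_1^+\lambda_2\tau_1\tau_2$, which is \eqref{eq:T20}.

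For the variance I condition on the whole path of $\mathcal N_2$. Since $\mathcal N_3(t)$ is conditionally Poisson, $\mathbf{Var}[\mathcal N_3(t)\mid\mathcal N_2] = \mathbf E[\mathcal N_3(t)\mid\mathcal N_2]$, so the law of total variance yields
\begin{align*}
  \mathbf{Var}[\mathcal N_3(t)] &= \mathbf E[\mathcal N_3(t)] \\
  &\quad + \lambda_3^2\tau_3^2 \iint \mathbf{Cov}[\mathcal N_2(t-s),\mathcal N_2(t-s')] \Big(\exp\Big(\frac{1}{\tau_3}\Big)\ast\nu\Big)(ds) \Big(\exp\Big(\frac{1}{\tau_3}\Big)\ast\nu\Big)(ds').
\end{align*}
The first summand is $\lambda_1^+\lambda_2\lambda_3\tau_1\tau_2\tau_3$, the leading term of \eqref{eq:T21}. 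Into the double integral I substitute the covariance in the form \eqref{eq:T22b}, using $|(t-s)-(t-s')|=|s-s'|$, which splits it into a ``Poisson-RNA'' part proportional to $e^{-|s-s'|/\tau_2}$ and a ``gene-driven'' part proportional to $\mathbf E[e^{-||s-s'|+T_2+G_2-T_2'-G_2'|/\tau_1}]$. Writing $S=T_3+G_3$ and $S'=T_3'+G_3'$ with $T_3,T_3'\sim\exp(1/\tau_3)$ independent, the double integral turns each lag $s-s'$ into $S-S'$.

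The Poisson-RNA part becomes $\lambda_1^+\lambda_2\tau_1\tau_2\,\mathbf E[e^{-|S-S'|/\tau_2}]$, and one application of Lemma~\ref{l:key} (exponentials at scale $\tau_3$, decay $1/\tau_2$) evaluates this to $A$, producing the middle term $\lambda_1^+\lambda_2\lambda_3^2\tau_1\tau_2\tau_3^2\,A$ of \eqref{eq:T21}. The gene-driven part is the crux. Set $W:=T_2+G_2-T_2'-G_2'$; it is symmetric, so $c\mapsto\mathbf E[e^{-|c+W|/\tau_1}]$ is even and therefore $\mathbf E[e^{-||s-s'|+W|/\tau_1}]=\mathbf E[e^{-|(s-s')+W|/\tau_1}]$ for every $(s,s')$. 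This removes the nested absolute value, and the double integral becomes
\begin{align*}
  \mathbf E\Big[e^{-|(S-S')+W|/\tau_1}\Big] = \mathbf E\Big[e^{-|(T_2+T_3)+(G_2+G_3)-(T_2'+T_3')-(G_2'+G_3')|/\tau_1}\Big],
\end{align*}
a constant-free difference of two independent exponential pairs (scales $\tau_2$ and $\tau_3$) plus the total delay difference. I then apply Lemma~\ref{l:key} twice: first integrate out $T_3,T_3'$ to obtain a combination of $\mathbf E[e^{-|a|/\tau_1}]$ and $\mathbf E[e^{-|a|/\tau_3}]$ with $a=(T_2-T_2')+(G_2+G_3-G_2'-G_3')$, then integrate out $T_2,T_2'$ in each summand. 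Collecting the resulting rational prefactors gives exactly $B$ as in \eqref{eq:T23}, and since $\lambda_3^2\tau_3^2\cdot\lambda_1^+\lambda_1^-\lambda_2^2\tau_1^2\tau_2^2=\lambda_1^+\lambda_1^-\lambda_2^2\lambda_3^2\tau_1^2\tau_2^2\tau_3^2$, this is the final term of \eqref{eq:T21}.

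The main obstacle is the gene-driven term: justifying the removal of the nested absolute value through the symmetry of $W$, and then carrying out the repeated application of Lemma~\ref{l:key} to a sum of two exponential pairs while matching the bookkeeping of the rational coefficients to the partial-fraction shape displayed in \eqref{eq:T23}. The nesting order (integrating the $\tau_3$-pair before the $\tau_2$-pair) is what produces the specific grouping in \eqref{eq:T23} rather than an equivalent but differently-factored expression.
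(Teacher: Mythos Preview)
Your proposal is correct and follows essentially the same route as the paper: condition on the RNA path to split off the Poisson term $\mathbf E[\mathcal N_3(t)]$, insert the covariance formula \eqref{eq:T22b} into the remaining double integral, and evaluate the two pieces via Lemma~\ref{l:key}. Your explicit evenness argument for removing the nested absolute value is exactly the first equality of \eqref{eq:key2}, and your order of integration (the $\tau_3$-pair first, then the $\tau_2$-pair) matches the paper's two applications of \eqref{eq:key2}, so the bookkeeping lands on the same partial-fraction form \eqref{eq:T23}.
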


\begin{remark}[Convexity and deterministic delay]
  We \label{rem:convex2} note that the last terms in \eqref{eq:T22}
  and \eqref{eq:T23} stay bounded, even for $\tau_1\to\tau_2$,
  $\tau_2\to\tau_3$ and $\tau_1 \to \tau_3$. In addition, $A, B\geq 0$
  can easily be shown, which means that $\mathbf{Var}[\mathcal
  N_3(t)]\geq \mathbf E[\mathcal N_3(t)]$ in all cases. Moreover, we
  will -- as for the second moments of the amount of RNA -- argue that
  $\mathbf{Var}[\mathcal N_3(t)]$ is largest in the absence of
  delay. First, recall from~\eqref{eq:convex} that $A\leq 1$ with
  equality if and only if $\mu$ is a delta-measure (i.e.\
  $\mathbf{Var}[G_2]=0$). Moreover, for a similar bound on $B$, assume
  without loss of generality that $\tau_1>\tau_3$ and write, using
  again~\eqref{eq:convex} and $S:=|G_3-G_3'+G_2-G_2'|$
  \begin{align*}
    \frac{\tau_1 + \tau_3}{\tau_1} B & = \frac{\tau_1^2f_S(\tau_1,
      \tau_2)}{(\tau_1 + \tau_2)(\tau_1 - \tau_3)} - \frac{\tau_3^2
      f_S(\tau_2, \tau_3)}{(\tau_2 + \tau_3)(\tau_1 - \tau_3)} \\ & =
    \frac{\tau_1^2 \tau_2f_S(\tau_1, \tau_2) + \tau_1^2
      \tau_3f_S(\tau_1, \tau_2) - \tau_1\tau_3^2f_S(\tau_2, \tau_3) -
      \tau_2 \tau_3^2 f_S(\tau_2,
      \tau_3)}{(\tau_1 + \tau_2)(\tau_2 + \tau_3)(\tau_1 - \tau_3)} \\
    & = \frac{\tau_2(\tau_1 + \tau_3)}{(\tau_1 + \tau_2)(\tau_2 +
      \tau_3)} \frac{\tau_1^2f_S(\tau_1,
      \tau_2) - \tau_3^2 f_S(\tau_3, \tau_2)}{\tau_1^2 - \tau_3^2} \\
    & \qquad \qquad \qquad \qquad + \frac{\tau_1 \tau_3}{(\tau_1 +
      \tau_2)(\tau_2 + \tau_3)}\frac{\tau_1f_S(\tau_1, \tau_2) -
      \tau_3 f_S(\tau_3, \tau_2)}{\tau_1 - \tau_3}.
  \end{align*}
  Next, it is easy to check that the function $t\mapsto f_S(t,
  \tau_2)$ is convex. (For this, it suffices to show that $t\mapsto
  t\mathbf E[e^{-S/t}]$ is convex, which can be shown by computing two
  derivatives.) Therefore, using the convex combinations $1 =
  \frac{\tau_1^2}{\tau_1^2 - \tau_3^2} - \frac{\tau_1^3}{\tau_1^2 -
    \tau_3^2}$ and $1 = \frac{\tau_1}{\tau_1 - \tau_3} -
  \frac{\tau_3}{\tau_1 - \tau_3}$ (recall $\tau_1 > \tau_3$), we see
  that
  \begin{align*}
    & \frac{\tau_1^2f_S(\tau_1, \tau_2) - \tau_3^2 f_S(\tau_3,
      \tau_2)}{\tau_1^2 - \tau_3^2} \\ & \qquad \leq f_S\Big(
    \frac{\tau_3^2}{\tau_1^2 - \tau_3^2} - \frac{\tau_3^2}{\tau_1^2 -
      \tau_3^2}, \tau_2\Big) = f_S\Big(\frac{\tau_1^2 + \tau_1\tau_3 +
      \tau_3^2}{\tau_1 + \tau_3}, \tau_2\Big)\leq 1,\\
    & \frac{\tau_1f_S(\tau_1, \tau_2) - \tau_3 f_S(\tau_3,
      \tau_2)}{\tau_1 - \tau_3} \\ & \qquad \leq f_S\Big(
    \frac{\tau_1^2}{\tau_1 - \tau_3} - \frac{\tau_3^2}{\tau_1 -
      \tau_3}, \tau_2\Big) = f_S(\tau_1 + \tau_3, \tau_2)\leq 1
  \end{align*}
  with equality if and only if $\mathbf{Var}[S]=0$. In total, we see
  that
  \begin{align*}
    \mathbf{Var}[\mathcal N_3(t)] & \leq \lambda_1^+ \lambda_2
    \lambda_3 \tau_1 \tau_2 \tau_3 + \lambda_1^+ \lambda_2 \lambda_3^2
    \tau_1 \tau_2\tau_3^2 \frac{\tau_2}{\tau_2 + \tau_3} \\ & \qquad
    \qquad \qquad \qquad + \lambda_1^+ \lambda_1^-
    \lambda_2^2\lambda_3^2 \tau_1^2 \tau_2^2 \tau_3^2
    \frac{\tau_1}{\tau_1 + \tau_3} \frac{\tau_1 \tau_2 + \tau_1
      \tau_3+ \tau_2 \tau_3 }{(\tau_2 + \tau_3)(\tau_1 + \tau_2)}
  \end{align*}
  with equality if $\mu$ and $\nu$ are delta-measures. In this case,
  \begin{align*}
    \frac{\mathbf{Var}[\mathcal N_3(t)]}{\mathbf E[\mathcal N_3(t)]^2}
    & = \frac{1}{\mathbf E[\mathcal N_3(t)]} + \frac{1}{\lambda_1^+
      \lambda_2 \tau_1 \tau_2} \frac{\tau_2}{\tau_2 + \tau_3} +
    \frac{\lambda_1^-}{\lambda_1^+} \frac{\tau_1}{\tau_1 +
      \tau_2}\frac{\tau_1 \tau_2 + \tau_1 \tau_3
      +\tau_2\tau_3}{(\tau_2 + \tau_3)(\tau_1 + \tau_3)},
  \end{align*}
  which is a well-known result; see equation~(4)
  in~\cite{Paulsson2005}.
\end{remark}


\begin{proof}[Proof of Theorem~\ref{T2}]
  Using~\eqref{eq:2} and Theorem~\ref{T1}
  \begin{align*}
    \mathbf E[\mathcal N_3(t)] & = \mathbf E\left[\lambda_3 \tau_3
      \int_{0}^{\infty}\mathcal{N}_2(t-s)\Big(\exp\Big(\frac{1}{\tau_3}\Big)
      \ast \nu\Big)(ds)\right] \\ &= \lambda_3\tau_3 \lambda_2
    \lambda_1^+\tau_1\tau_2.
  \end{align*}
  For the variance, we again use~\eqref{eq:2} and two independent
  $\exp(1/\tau_3)$-distributed random variables $T_3, T_3'$ for
  \begin{align}
    \notag & \mathbf{Var}[\mathcal N_3(t)] = \mathbf
    E\big[\mathbf{Var}[\mathcal N_3(t)|\mathcal N_1, \mathcal
    N_2]\big] + \mathbf{Var}\big[\mathbf E[\mathcal N_3(t)|\mathcal
    N_1, \mathcal N_2]\big], \intertext{with} \notag & \mathbf
    E\big[\mathbf{Var}[\mathcal N_3(t)|\mathcal N_1, \mathcal
    N_2]\big] = \mathbf{E}\Big[\lambda_3 \tau_3 \int_0^\infty \mathcal
    N_2(t-r) \Big(\exp\Big(-\frac 1{\tau_3}\Big)\ast \nu\Big)(dr)\Big]
    \\ \notag & \qquad \qquad \qquad \qquad \qquad =
    \lambda_1^+\lambda_2 \lambda_3\tau_1\tau_2\tau_3 \intertext{and
      with~\eqref{eq:T22b} in the third equality} \notag &
    \mathbf{Var}\big[\mathbf E[\mathcal N_3(t)|\mathcal N_1, \mathcal
    N_2]\big] = \mathbf{Var}\Big[\lambda_3 \tau_3 \int_0^\infty
    \mathcal N_2(t-r)
    \Big(\exp\Big(\frac 1{\tau_3}\Big)\ast \nu\Big)(dr)\Big] \\
    \notag & = \lambda_3^2 \tau_3^2 \int_0^\infty\int_0^\infty
    \mathbf{Cov}[\mathcal N_2(t-s), \mathcal N_2(t-r)] \\ \notag &
    \qquad \qquad \qquad \qquad \qquad \Big(\exp\Big(\frac
    1{\tau_3}\Big)\ast \nu\Big)(dr)\Big(\exp\Big(\frac
    1{\tau_3}\Big)\ast \nu\Big)(ds) \\ \notag & = \lambda_1^+
    \lambda_2\lambda_3^2 \tau_1\tau_2 \tau_3^2 \int_0^\infty
    \int_0^\infty e^{-\frac{|s-r|}{\tau_2}} \Big(\exp\Big(\frac
    1{\tau_3}\Big)\ast \nu\Big)(dr)\Big(\exp\Big(\frac
    1{\tau_3}\Big)\ast \nu\Big)(ds) \\ \notag & \qquad + \lambda_1^+
    \lambda_1^-\lambda_2^2\lambda_3^2\tau_1^2 \tau_2^2 \tau_3^2 \\
    \notag & \qquad \qquad \Big( \int_0^\infty \int_0^\infty \mathbf
    E\Big[e^{-\frac{||s-r| + T_2+G_2-T_2'-G_2'|}{\tau_1}}\Big] \\
    \notag & \qquad \qquad \qquad \qquad \qquad \qquad
    \Big(\exp\Big(\frac 1{\tau_3}\Big)\ast
    \nu\Big)(dr)\Big(\exp\Big(\frac 1{\tau_3}\Big)\ast
    \nu\Big)(ds)\Big) \\ \label{eq:651} & = \lambda_1^+
    \lambda_2\lambda_3^2 \tau_1\tau_2 \tau_3^2 \mathbf
    E\Big[e^{-\frac{|T_3+G_3 - T_3'-G_3'|}{\tau_2}}\Big] \\ \notag &
    \qquad + \lambda_1^+ \lambda_1^-\lambda_2^2\lambda_3^2\tau_1^2
    \tau_2^2 \tau_3^2 \mathbf E\Big[e^{-\frac{||T_3+G_3-T_3'-G_3'| +
        T_2+G_2-T_2'-G_2'|}{\tau_1}}\Big].
  \end{align}
  Now, by Lemma~\ref{l:key}, \eqref{eq:key1}, 
  \begin{align}
    \label{eq:652}
    \mathbf E\Big[e^{-\frac{|T_3+G_3-T_3'-G_3'|}{\tau_2}}\Big] & =
    \frac{\tau_2}{\tau_2 + \tau_3} \frac{\tau_2\mathbf
      E\Big[e^{-\frac{|G_3-G_3'|}{\tau_2}}\Big] - \tau_3\mathbf
      E\Big[e^{-\frac{|G_3-G_3'|}{\tau_3}}\Big]}{\tau_2 - \tau_3}
  \end{align}
  and by applying \eqref{eq:key2} twice (first for $S=T_2 + G_2' - T_2
  - G_2$ and then for $S=G_2 - G_2' + G_3 - G_3'$
  \begin{align}
    \label{eq:653}
    \mathbf E\Big[ & e^{-\frac{||T_3+G_3-T_3'-G_3'| +
        T_2+G_2-T_2'-G_2'|}{\tau_1}}\Big] \\ \notag & =
    \frac{\tau_1}{\tau_1 + \tau_3} \frac{\tau_1 \mathbf
      E\Big[e^{-\frac{|T_2+G_2-T_2'-G_2' + G_3 - G_3'|}{\tau_1}}\Big]
      - \tau_3 \mathbf E\Big[e^{-\frac{|T_2+G_2-T_2'-G_2' + G_3 -
          G_3'|}{\tau_3}}\Big]}{\tau_1 - \tau_3} \\ \notag & =
    \frac{\tau_1}{\tau_1 + \tau_3} \Big( \frac{\tau_1^2}{\tau_1 +
      \tau_2} \frac{\tau_1 \mathbf E\Big[e^{-\frac{|G_3-G_3' + G_2 -
          G_2'|}{\tau_1}}\Big] - \tau_2\mathbf
      E\Big[e^{-\frac{|G_3-G_3' + G_2 -
          G_2'|}{\tau_2}}\Big]}{(\tau_1 - \tau_3)(\tau_1 - \tau_2)} \\
    \notag & \qquad \qquad \qquad - \frac{\tau_3^2}{\tau_2 + \tau_3}
    \frac{\tau_2 \mathbf E\Big[e^{-\frac{|G_3-G_3' + G_2 -
          G_2'|}{\tau_2}}\Big] - \tau_3\mathbf
      E\Big[e^{-\frac{|G_3-G_3' + G_2 - G_2'|}{\tau_3}}\Big]}{(\tau_1
      - \tau_3)(\tau_2 - \tau_3)} \Big).
  \end{align}
  Hence, plugging \eqref{eq:652} and \eqref{eq:653} in \eqref{eq:651},
  we get the result.
\end{proof}

\section{Examples}
\label{S:ex}
Here, we present some examples for different kinds of delays and their
consequences on the variances of the number of RNA and protein
molecules, respectively. The first two are uniform and exponential
delays, which are also compared in Figure~\ref{fig1}. The main work is
to compute the quantities $A$ and $B$ from Theorem~\ref{T2}. We also
present a result for a delay of small variance in
Subsection~\ref{ss:small}.

\begin{figure}
  \begin{center}
    \includegraphics[width=6cm]{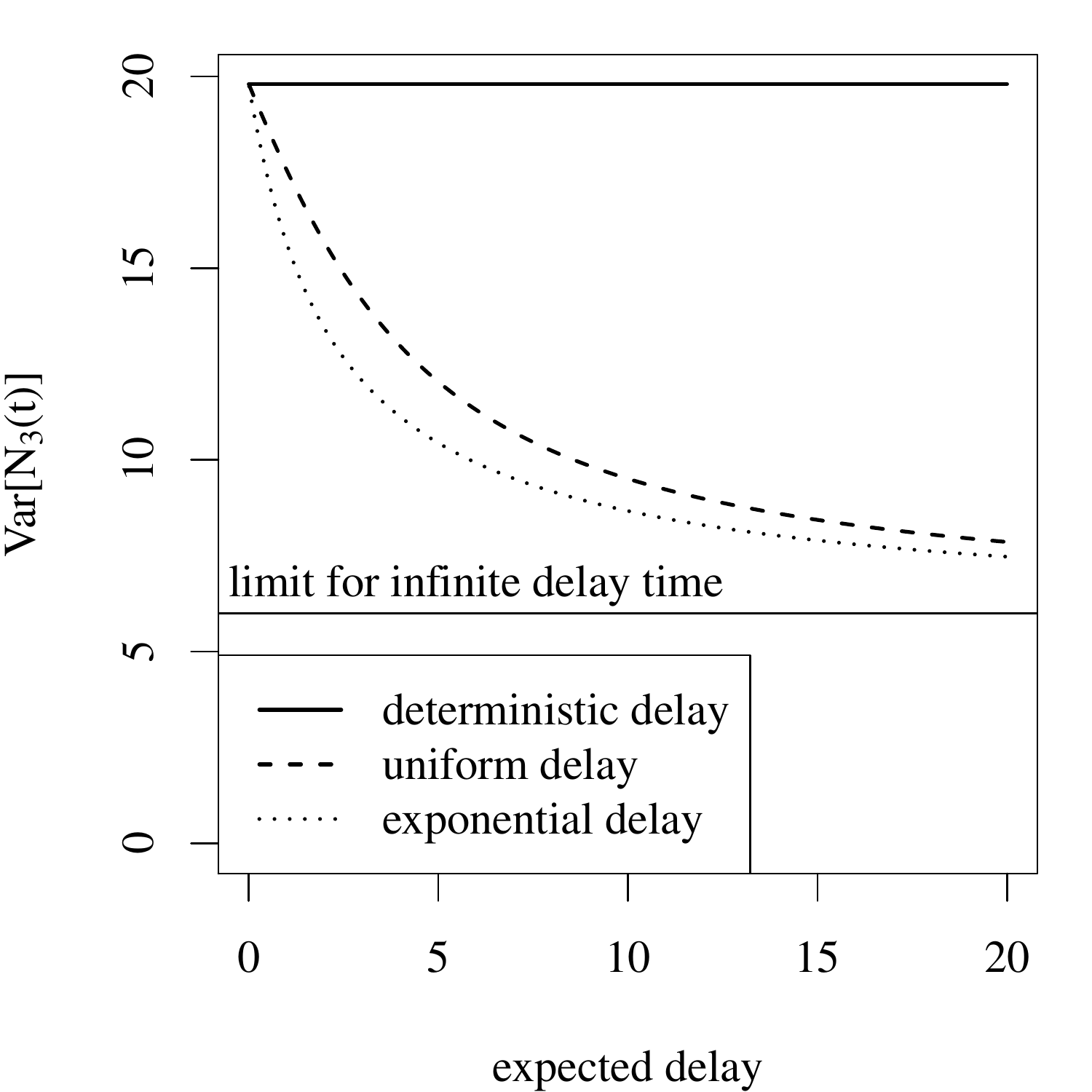}
    \caption{\label{fig1} The results from Theorem~\ref{T2},
      Lemma~\ref{l:uni} and Lemma~\ref{l:exp} are summarized in this
      figure. We fix $\lambda_1^+=\lambda_1^-=\lambda_2 =
      \lambda_3=1$, $\tau_1=1, \tau_2=2, \tau_3=3$ here. We display
      the dependency of the variance of the number of proteins
      $\mathbf {Var}[\mathcal N_3(t)]$, on the expected delay
      time. For deterministic delay, the variance does not change in
      equilibrium, while it decreases for uniform and even more for
      exponentially distributed delay. For simplicity, we choose $b=d$
      and $a=c=0$ for Lemma~\ref{l:uni} and $\sigma_2 = \sigma_3$ in
      Lemma~\ref{l:exp}.}
  \end{center}
\end{figure}

\subsection{Uniform delay}
\begin{lemma}[Uniform delay\label{l:uni}]
  Let $\mu$ be the uniform distribution on $[a,b]$, $\nu$ be the
  uniform distribution on $[c,d]$ and $G_2,G_2'\sim\mu, G_3,
  G_3'\sim\nu$ be independent. Then, for $e_+ := max(d-c, b-a)$ and
  $e_-:=\min(d-c,b-a)$,
  \begin{align*}
    \mathbf E\Big[e^{-\frac{|G_2 - G_2'|}{\tau}}\Big] & =
    \frac{2\tau}{b-a}\Big( 1 -
    \frac{\tau}{b-a}(1-e^{-\frac{b-a}{\tau}})\Big),\\
    \mathbf E\Big[e^{-\frac{|G_3-G_3'+G_2 - G_2'|}{\tau}}\Big] & =
    \frac{2\tau}{3(b-a)^2(d-c)^2}\Big(3 e_-^2 e_+ - e_-^3 - 6 e_-
    \tau^2 \\ & \qquad \qquad \qquad \qquad + 3
    e^{-\frac{e_-+e_+}{\tau}} (e^{\frac{e_-}{\tau}}-1)
    (e^{\frac{e_-}{\tau}} + 2 e^{\frac{e_+}{\tau}}-1) \tau^3\Big).
  \end{align*}
\end{lemma}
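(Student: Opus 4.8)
The plan is to reduce both identities to the elementary computation of $\mathbf E[e^{-|D|/\tau}]$ for a \emph{symmetric triangular} random variable $D$, using that the difference of two independent uniforms is triangular. For the first identity, note that $G_2 - G_2' \stackrel{d}{=} (b-a)(U - U')$ with $U, U'$ independent uniform on $[0,1]$, so $G_2 - G_2'$ has the triangular density $x \mapsto \frac{(b-a)-|x|}{(b-a)^2}$ on $[-(b-a), b-a]$. Hence
\[
  \mathbf E\Big[e^{-\frac{|G_2-G_2'|}{\tau}}\Big]
  = \frac{2}{(b-a)^2}\int_0^{b-a} e^{-x/\tau}\big((b-a)-x\big)\,dx,
\]
which is elementary (integrate the $xe^{-x/\tau}$ part by parts) and collapses to the stated formula. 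The absolute value is the only reason one cannot simply factor the moment generating function over the two independent uniforms; once the sign of the exponent is fixed on $[0,b-a]$ the integral is routine.

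For the second identity I would write $S := G_3 - G_3' + G_2 - G_2' = W_1 + W_2$, where $W_1 := G_2 - G_2'$ and $W_2 := G_3 - G_3'$ are independent symmetric triangular variables with half-widths $b-a$ and $d-c$. Since the law of $S$ is symmetric under interchanging the two pairs, I may assume without loss of generality that $W_1$ is the wider one, i.e.\ $e_+ = b-a \geq d-c = e_-$; this is exactly where the $\max/\min$ enters. Conditioning on $W_2$ gives $\mathbf E[e^{-|S|/\tau}] = \mathbf E[h(W_2)]$ with the even function $h(y) := \mathbf E[e^{-|W_1+y|/\tau}]$, and because $e_- \leq e_+$ only the values of $h$ on $[-e_-,e_-]\subseteq[-e_+,e_+]$ are ever needed.

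The core computation is thus $h$ on $[0,e_+)$: splitting $\int_{-e_+}^{e_+} e^{-|x+y|/\tau}\,\tfrac{e_+-|x|}{e_+^2}\,dx$ at the two kinks $x=-y$ and $x=0$ produces three integrals of a linear function against $e^{\pm x/\tau}$, which collect into
\[
  h(y) = \frac{2\tau(e_+ - y)}{e_+^2}
  + \frac{\tau^2 e^{-e_+/\tau}}{e_+^2}\,e^{y/\tau}
  + \frac{\tau^2(e^{-e_+/\tau}-2)}{e_+^2}\,e^{-y/\tau},
\]
a linear polynomial in $y$ plus $e^{\pm y/\tau}$ terms with constant coefficients (the coefficient of $e^{-y/\tau}$ carrying the $e^{-e_+/\tau}$ of the wide triangular). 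A quick check that $h(0)$ reproduces the first identity and that $h(e_+) = e^{-e_+/\tau}\,\mathbf E[e^{-W_1/\tau}]$ validates this formula. Averaging over the narrow triangular, $\mathbf E[e^{-|S|/\tau}] = \tfrac{2}{e_-^2}\int_0^{e_-} h(y)\,(e_- - y)\,dy$ (using evenness of $h$ and of the density), is again a finite sum of elementary integrals: the $e^{\pm y/\tau}$ pieces generate the factors $e^{\pm e_-/\tau}$, the constants in $h$ supply $e^{\pm e_+/\tau}$, and collecting terms reproduces the bracket $3e_-^2 e_+ - e_-^3 - 6e_-\tau^2 + 3\tau^3 e^{-(e_-+e_+)/\tau}(e^{e_-/\tau}-1)(e^{e_-/\tau}+2e^{e_+/\tau}-1)$.

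The main obstacle is not any single integral but the bookkeeping forced by the nested absolute values: the kinks at $x=-y$ and $x=0$ inside $h$, and at $y=0$ in the outer average, must all be tracked correctly, after which a sizable but entirely routine algebraic simplification collapses the result into the symmetric $e_+,e_-$ form. One delicate point worth flagging is that the internal ordering (whether $e_+ \lessgtr 2e_-$) never matters -- which holds precisely because, after the reduction $e_+ \geq e_-$, $h$ is only evaluated on $[-e_-,e_-]$, where a single formula applies. As sanity checks I would verify $\tau \to \infty$ (both expectations tend to $1$) and the degenerate limit $e_- \to 0$, which must recover the first identity.
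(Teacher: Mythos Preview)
Your proof is correct and takes a genuinely different route from the paper's. Both arguments begin by shifting away the lower endpoints (you implicitly, the paper explicitly via $a=c=0$) and both handle the first identity the same way. For the second identity, however, the paper pairs the variables as the two i.i.d.\ \emph{sums} $G_2+G_3$ and $G_2'+G_3'$, writes down the trapezoidal density of $\mu\ast\nu$, and then evaluates the resulting double integral $\tfrac{2}{b^2d^2}\int_0^{b+d}\int_0^x e^{(y-x)/\tau}f(x)f(y)\,dy\,dx$ by splitting it into seven pieces according to the trapezoid's breakpoints---a computation the paper explicitly delegates to Mathematica. You instead pair the variables as the two independent \emph{differences} $W_1=G_2-G_2'$ and $W_2=G_3-G_3'$, each symmetric triangular, and reduce to an iterated one-dimensional calculation by conditioning on the narrower one. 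Your approach buys a cleaner hand computation (a closed form for $h(y)$ with built-in checks at $y=0$ and $y=e_+$, followed by a single average against a triangular density), and it makes the $e_+,e_-$ symmetry visible from the outset through the WLOG reduction rather than only at the end. The paper's grouping is perhaps more direct but leans on computer algebra for the seven-piece integral; yours trades that for a short intermediate lemma about $h$.
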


\begin{proof}
  Without loss of generality we can assume that $a=c=0$ since
  deterministic delays do not affect our result. Let $G_2,
  G_2'\sim\mu$ and $G_3, G_3'\sim\nu$ be independent. Note that the
  density of $\mu\ast\nu \sim G_2 + G_3$ is given by (recall $s\wedge
  t := \min(s,t)$ and $s^+ := \max(s,0)$)
  \begin{align*}
    f(x) & = \frac{1}{bd} \int_0^x 1_{y\leq b} 1_{x-y \leq d} dy =
    \frac{1}{bd} \int_{(x-d)^+}^{x\wedge b} dy = \frac{(x\wedge b) -
      (x-d)^+ }{bd}
  \end{align*}
  for $0\leq x\leq b+d$. We need to compute (using Mathematica),
  assuming $b\leq d$,
  \begin{align*}
    & \mathbf E\Big[e^{-\frac{|G_2 - G_2'|}{\tau}}\Big] =
    \frac{2}{b^2} \int_0^b \int_0^x e^{\frac{y-x}{\tau}} dy dx =
    \frac{2\tau}{b^2}\int_0^b 1 - e^{-\frac x\tau} dx \\ & \qquad=
    \frac{2\tau}{b}\Big( 1 -
    \frac{\tau}{b}(1-e^{-\frac{b}{\tau}})\Big),\\
    & \mathbf E\Big[e^{-\frac{|G_3-G_3'+G_2 - G_2'|}{\tau}}\Big] =
    \frac{2}{b^2d^2} \int_0^{b+d} \int_0^x e^{\frac{y-x}{\tau}}
    ((x\wedge b) - (x-d)^+)\\ & \qquad\qquad\qquad\qquad\qquad\qquad\qquad
    \cdot ((y\wedge b) - (y-d)^+) dy dx \\ & = \frac{2}{b^2d^2} \Big(
    \int_0^{b} \int_0^x e^{\frac{y-x}{\tau}}xy dy dx +
    \int_b^{b+d}\int_0^b e^{\frac{y-x}{\tau}} by dy \\ & \qquad\qquad +
    \int_b^{b+d}\int_b^x e^{\frac{y-x}{\tau}} b^2 dy dx -
    \int_d^{b+d}\int_d^x e^{\frac{y-x}{\tau}} b(y-d) dy dx \\ & \qquad\qquad
    - \int_d^{b+d}\int_0^b e^{\frac{y-x}{\tau}} (x-d)y dy dx-
    \int_d^{b+d}\int_b^x e^{\frac{y-x}{\tau}} (x-d)b dy dx \\ & \qquad\qquad
    +
    \int_d^{b+d} \int_d^x e^{\frac{y-x}{\tau}}(x-d)(y-d) dy dx\Big) \\\qquad
    & = \frac{2\tau}{3b^2d^2}\Big(3 b^2 d - b^3 + - 6 b \tau^2 + 3
    e^{-\frac{b + d}{\tau}} (e^{\frac{b}{\tau}}-1) (e^{\frac{b}{\tau}}
    + 2
    e^{\frac{d}{\tau}}-1) \tau^3\Big).
    \end{align*} 
\end{proof}

\subsection{Exponential delay}
\begin{lemma}[Exponential delay\label{l:exp}]
  For $\sigma_2, \sigma_3>0$, let $\mu$ be the
  $\exp(1/\sigma_2)$-distribution, $\nu$ be the
  $\exp(1/\sigma_3)$-distribution and $G_2,G_2'\sim\mu, G_3,
  G_3'\sim\nu$ be independent. Then,
  \begin{align*}
    \mathbf E\Big[e^{-\frac{|G_2 - G_2'|}{\tau}}\Big] & =
    \frac{\tau}{\sigma_2 + \tau},\\
    \mathbf E\Big[e^{-\frac{|G_3-G_3'+G_2 - G_2'|}{\tau}}\Big] & =
    \frac{\tau(\sigma_2\sigma_3 + \tau(\sigma_2 +
      \sigma_3))}{(\tau+\sigma_2)(\tau+\sigma_3)(\sigma_2 +
      \sigma_3)}.
  \end{align*}
\end{lemma}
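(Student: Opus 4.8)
The plan is to reduce both identities to the elementary Laplace transform of an exponential variable. For the first formula, the key observation is that the difference of two independent $\exp(1/\sigma_2)$-variables is symmetric Laplace, so that $|G_2 - G_2'|$ is again $\exp(1/\sigma_2)$-distributed (its density is $\frac{1}{\sigma_2}e^{-x/\sigma_2}$ on $[0,\infty)$). Since for $X\sim\exp(1/\sigma_2)$ one has $\mathbf E[e^{-X/\tau}] = \frac{1/\sigma_2}{1/\sigma_2 + 1/\tau} = \frac{\tau}{\sigma_2 + \tau}$, the first identity follows immediately. (Equivalently, one can carry out the double integral against the product exponential density exactly as in the proof of Lemma~\ref{l:uni}.)

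For the second formula, write $U := G_2 - G_2'$ and $V := G_3 - G_3'$, which are independent symmetric Laplace variables of scales $\sigma_2$ and $\sigma_3$; the target is $\mathbf E[e^{-|U+V|/\tau}]$. First I would record the characteristic function of a symmetric Laplace variable of scale $b$, namely $1/(1+b^2 s^2)$, so that $W := U+V$ has characteristic function $1/\big((1+\sigma_2^2 s^2)(1+\sigma_3^2 s^2)\big)$. A partial-fraction decomposition in $s^2$ gives a linear combination $\frac{A}{1+\sigma_2^2 s^2} + \frac{B}{1+\sigma_3^2 s^2}$ with $A = \frac{\sigma_2^2}{\sigma_2^2 - \sigma_3^2}$ and $B = \frac{\sigma_3^2}{\sigma_3^2 - \sigma_2^2}$, satisfying $A+B=1$.

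By uniqueness of the Fourier transform, the density of $W$ is the corresponding linear combination of the two Laplace densities of scales $\sigma_2$ and $\sigma_3$ (a signed combination, since one of $A,B$ is negative, but this is harmless for computing an expectation). Linearity then yields $\mathbf E[e^{-|W|/\tau}] = A\,\frac{\tau}{\sigma_2 + \tau} + B\,\frac{\tau}{\sigma_3 + \tau}$, using the first identity for each scale. Putting the two terms over a common denominator, the factor $\sigma_2^2 - \sigma_3^2 = (\sigma_2-\sigma_3)(\sigma_2+\sigma_3)$ in the denominator cancels against the factor $(\sigma_2 - \sigma_3)$ arising in the numerator $\sigma_2^2(\tau+\sigma_3) - \sigma_3^2(\tau+\sigma_2) = (\sigma_2-\sigma_3)\big[\tau(\sigma_2+\sigma_3) + \sigma_2\sigma_3\big]$, leaving exactly the claimed $\frac{\tau(\sigma_2\sigma_3 + \tau(\sigma_2+\sigma_3))}{(\tau+\sigma_2)(\tau+\sigma_3)(\sigma_2+\sigma_3)}$.

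The only real subtlety -- the main obstacle -- is the decomposition step: because the partial-fraction coefficients are not both nonnegative, $W$ is not a genuine mixture of Laplace variables, so the argument must be phrased at the level of densities via uniqueness of characteristic functions rather than probabilistically. The decomposition also degenerates when $\sigma_2 = \sigma_3$; since the final expression is continuous (indeed analytic) in $(\sigma_2,\sigma_3)$, this case is recovered by continuity, or handled directly by convolving the two equal-scale Laplace densities. A fully elementary alternative, matching the style of Lemma~\ref{l:uni}, is to evaluate $\mathbf E[e^{-|U+V|/\tau}]$ as a direct iterated integral against the product of the four exponential densities, splitting the region of integration according to the sign of $U+V$; this avoids Fourier analysis entirely at the cost of a longer but routine computation.
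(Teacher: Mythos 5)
Your proof is correct. For the first identity you argue exactly as the paper does: $|G_2-G_2'|$ is again $\exp(1/\sigma_2)$-distributed, so the expectation is the Laplace transform $\tau/(\sigma_2+\tau)$. For the second identity, however, you take a genuinely different route. The paper simply invokes its key lemma (Lemma~\ref{l:key} with $t=0$, $G=G_2$, $G'=G_2'$, $T=G_3$, $T'=G_3'$, $\sigma=\sigma_3$), which applies precisely because $G_3,G_3'$ are themselves exponential and can play the role of the exponential pair $T,T'$ there; the answer then drops out by substituting the first identity into the right-hand side of \eqref{eq:key1}. You instead compute the characteristic function of $W=(G_2-G_2')+(G_3-G_3')$ as the product of two Laplace characteristic functions, do a partial-fraction decomposition in $s^2$, and identify the density of $W$ as a signed combination of two Laplace densities by Fourier uniqueness, after which linearity reduces the expectation to two applications of the first identity. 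Both computations end in the same algebra, namely $\sigma_2^2(\tau+\sigma_3)-\sigma_3^2(\tau+\sigma_2)=(\sigma_2-\sigma_3)\big(\sigma_2\sigma_3+\tau(\sigma_2+\sigma_3)\big)$, which cancels the $\sigma_2-\sigma_3$ in the denominator. The paper's route is shorter because the heavy lifting was already done in the appendix, and its intermediate formula degenerates only at $\tau=\sigma_3$; your route is self-contained, avoids the key lemma entirely, and extends readily to sums of more independent Laplace variables, but it requires exactly the care you flag: the decomposition is signed rather than a probabilistic mixture, and it breaks down at $\sigma_2=\sigma_3$, a case you legitimately recover by continuity of the final expression (or by direct convolution of equal-scale Laplace densities).
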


\begin{proof}
  Note that the distribution of $|G_2-G_2'|$ has the density
  $$ f(x) = \frac{2}{\sigma_2^2}\int_x^\infty e^{-\frac{y}{\sigma_2}} e^{-\frac{y-x}{\sigma_2}}dy = \frac{1}{\sigma_2} 
  e^{-\frac{x}{\sigma_2}},$$ i.e.\ is again an
  $\exp(1/\sigma_2)$-distribution. Hence,
  \begin{align*}
    \mathbf E\Big[e^{-\frac{|G_2 - G'_2|}{\tau}}\Big] & =
    \frac{\tau}{\sigma_2 + \tau}.
  \end{align*}
  Moreover, we can use Lemma~\ref{l:key} with $t=0$, $G=G_2, G'=G_2'$,
  $T=G_3, T'=G_3'$, $\sigma = \sigma_3$, in order to see that
  \begin{align*}
    \mathbf E\Big[e^{-\frac{|G_2 - G'_2 + G_3 - G_3'|}{\tau}}\Big] & =
    \frac{\tau}{\tau + \sigma_3}\frac{\tau\mathbf E\Big[e^{-\frac{|G_2
          - G'_2|}{\tau}}\Big] - \sigma_3\mathbf E\Big[e^{-\frac{|G_2
          - G'_2|}{\sigma_3}}\Big]}{\tau - \sigma_3} \\ & =
    \frac{\tau}{\tau+\sigma_3}\frac{\tau^2(\sigma_2 + \sigma_3) -
      \sigma_3^2(\sigma_2 + \tau)}{(\tau-\sigma_3)(\sigma_2 +
      \tau)(\sigma_2 + \sigma_3)} \\ & = \frac{\tau(\sigma_2\sigma_3 +
      \tau(\sigma_2 +
      \sigma_3))}{(\tau+\sigma_2)(\tau+\sigma_3)(\sigma_2 + \sigma_3)}.
  \end{align*}
\end{proof}

\subsection{Small variance}
\label{ss:small}
Delays can be the result of various mechanisms; see
\cite{Barrio:2006:PLoS-Comput-Biol:16965175} and references therein
for a list of possible mechanisms. Hence, by the central limit
theorem, it is reasonable to assume that the delay distribution has a
small variance. For this case, we obtain the following result.

\begin{corollary}
  Let $G_2, G_2'\sim \mu$ and $G_3, G_3'\sim\nu$ are independent, and
  such that $\mathbf{Var}[G_2], \mathbf{Var}[G_3]$ is small. Then, if
  $$\delta_2 := \mathbf E[|G_2 - \mathbf E[G_2]|^3], \qquad \delta_3 := 
  \mathbf E[|G_3 - \mathbf E[G_3]|^3],$$ it holds for $\delta_2,
  \delta_3\to 0$ that
  \begin{align*}
    \mathbf{Var}[\mathcal N_2(t)] & = \lambda_1^+ \lambda_2 \tau_1 \tau_2
    + \lambda_1^+\lambda_1^- \lambda_2^2 \tau_1^2 \tau_2
    \frac{1}{\tau_1 + \tau_2}\Big(1 - \frac{2\mathbf{Var}[G_2]}{\tau_1
      \tau_2}\Big) + \mathcal
    O(\delta_2),\\
    \mathbf{Var}[\mathcal N_3(t)] & = \lambda_1^+ \lambda_2 \lambda_3
    \tau_1 \tau_2 \tau_3 + \lambda_1^+ \lambda_2 \lambda_3^2
    \frac{\tau_1 \tau_2^2 \tau_3^2}{\tau_2 + \tau_3} \Big(1 -
    \frac{2\mathbf{Var}[G_3]}{\tau_2\tau_3}\Big) \\ & \qquad +
    \lambda_1^+ \lambda_1^- \lambda_2^2\lambda_3^2 \tau_1^3 \tau_2^2
    \tau_3^2\frac{\tau_1 \tau_2 + \tau_1 \tau_3 + \tau_2 \tau_3 -
      \mathbf{Var}[G_2 + G_3]}{(\tau_1 + \tau_2)(\tau_1 + \tau_3)( \tau_2
      + \tau_3)}+ \mathcal O(\delta_2, \delta_3).
  \end{align*}
\end{corollary}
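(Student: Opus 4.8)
The plan is to start from the exact second-order formulas already established and to feed into them the small-variance asymptotics of the only delay-dependent quantities that appear there, namely expectations of the form $\mathbf E[e^{-S/\tau}]$ with $S\in\{|G_2-G_2'|,\,|G_3-G_3'|,\,|G_3-G_3'+G_2-G_2'|\}$. Concretely, $\mathbf{Var}[\mathcal N_2]$ is given by \eqref{eq:T12} through the quantity $f_S(\tau_1,\tau_2)$ of \eqref{eq:convex} with $S=|G_2-G_2'|$, while $\mathbf{Var}[\mathcal N_3]$ is given by \eqref{eq:T21} through $A$ and $B$ in \eqref{eq:T22}--\eqref{eq:T23}, which are a single and a double divided difference of such expectations in the variables $\tau_2,\tau_3$ resp.\ $\tau_1,\tau_2,\tau_3$. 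So the whole corollary reduces to expanding these expectations to second order in the delay and substituting.

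For the expansion I would use Taylor's theorem with remainder: for $S\ge 0$,
$$
  e^{-S/\tau} = 1 - \frac{S}{\tau} + \frac{S^2}{2\tau^2} + R_\tau(S), \qquad |R_\tau(S)|\le \frac{S^3}{6\tau^3},
$$
so that $\mathbf E[e^{-S/\tau}] = 1 - \mathbf E[S]/\tau + \mathbf E[S^2]/(2\tau^2) + \mathcal O(\mathbf E[S^3])$. The key structural point is that the linear term $\mathbf E[S]$ is only of order $\sqrt{\mathbf{Var}}$, not of order $\mathbf{Var}$, but that it always enters through the divided-difference combination $\big(\tau'\mathbf E[e^{-S/\tau'}] - \tau''\mathbf E[e^{-S/\tau''}]\big)/(\tau'-\tau'')$, in which its coefficient $\tau'\cdot(-1/\tau')=-1$ is independent of the time constant and hence cancels. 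This cancellation is exactly what makes the expansions proceed in integer powers of the variance. It then remains to identify the surviving second moments, $\mathbf E[|G_2-G_2'|^2]=2\mathbf{Var}[G_2]$, $\mathbf E[|G_3-G_3'|^2]=2\mathbf{Var}[G_3]$ and $\mathbf E[|G_3-G_3'+G_2-G_2'|^2]=2\mathbf{Var}[G_2+G_3]$ (using independence and that each difference is centered), and to check that the remainders are of the claimed order: by the triangle inequality and $(|a|+|b|)^3\le 4(|a|^3+|b|^3)$ together with centering one gets $\mathbf E[|G_2-G_2'|^3]=\mathcal O(\delta_2)$, $\mathbf E[|G_3-G_3'|^3]=\mathcal O(\delta_3)$ and $\mathbf E[|G_3-G_3'+G_2-G_2'|^3]=\mathcal O(\delta_2,\delta_3)$.

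Substituting this into $f_S$ gives, after dividing out the now-factorised $(\tau'-\tau'')$, $f_S(\tau',\tau'')=1-\mathbf E[S^2]/(2\tau'\tau'')+\mathcal O(\mathbf E[S^3])$; applied with $S=|G_2-G_2'|$ this yields the $\mathcal N_2$ expansion from \eqref{eq:T12}, and applied to $A$ it yields the $\mathcal N_3$ term in \eqref{eq:T22}. For the last term one inserts the expansion of $\mathbf E[e^{-S/\tau}]$ with $S=|G_3-G_3'+G_2-G_2'|$ into $B$: each inner bracket $\tau'\mathbf E[e^{-S/\tau'}]-\tau''\mathbf E[e^{-S/\tau''}]$ first contracts via the cancellation above to $(\tau'-\tau'')\big(1-b/(\tau'\tau'')\big)$ with $b=\mathbf{Var}[G_2+G_3]$, the factors $(\tau_1-\tau_2)$, $(\tau_2-\tau_3)$, $(\tau_1-\tau_3)$ cancel, and the remaining rational function of $\tau_1,\tau_2,\tau_3$ collapses --- at order $b^0$ to the deterministic-delay value $\tfrac{\tau_1(\tau_1\tau_2+\tau_1\tau_3+\tau_2\tau_3)}{(\tau_1+\tau_2)(\tau_1+\tau_3)(\tau_2+\tau_3)}$ already recorded in Remark~\ref{rem:convex2}, and at order $b^1$ to $-\tfrac{\tau_1 b}{(\tau_1+\tau_2)(\tau_1+\tau_3)(\tau_2+\tau_3)}$. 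Multiplying $A$ and $B$ by their prefactors in \eqref{eq:T21} then gives the stated formula.

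The main obstacle is the bookkeeping for $B$: it is a second-order divided difference that is manifestly regular as $\tau_1\to\tau_2,\ \tau_2\to\tau_3,\ \tau_1\to\tau_3$ (cf.\ Remark~\ref{rem:convex2}) but is written as a quotient with those apparent poles, so one must expand carefully enough to see the $(\tau'-\tau'')$ factors cancel order by order before letting any time constants coincide. The first-moment cancellation at each divided-difference level is the mechanism that both removes the $\sqrt{\mathbf{Var}}$ contributions and keeps every intermediate expression finite; once it is used, the remaining work is the routine simplification of rational functions indicated above, and the error terms collect into $\mathcal O(\delta_2)$ and $\mathcal O(\delta_2,\delta_3)$ as claimed.
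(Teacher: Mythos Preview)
Your approach is essentially the same as the paper's: Taylor-expand $\mathbf E[e^{-S/\tau}]$ to second order, observe that the first-order contribution $-\mathbf E[S]$ cancels in every divided difference $(\tau'\mathbf E[e^{-S/\tau'}]-\tau''\mathbf E[e^{-S/\tau''}])/(\tau'-\tau'')$, and substitute into the exact formulas of Theorems~\ref{T1} and~\ref{T2} (the paper phrases the last step as computing $A_0-A$ and $B_0-B$ against the deterministic-delay values). Your articulation of the cancellation mechanism and of the third-moment remainder control is, if anything, cleaner than the paper's; just note that your correct quadratic coefficient $\mathbf E[S^2]/(2\tau^2)$ differs from the $\mathbf E[S^2]/\tau^2$ the paper writes, so your expansion $f_S=1-\mathbf E[S^2]/(2\tau'\tau'')$ gives $1-\mathbf{Var}[G_2]/(\tau_1\tau_2)$ rather than the printed $1-2\mathbf{Var}[G_2]/(\tau_1\tau_2)$.
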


\begin{proof}
  We write
  \begin{align*}
    \mathbf E\big[e^{-\frac{|G_2'-G_2|}{\tau}}\big] & = \mathbf
    E\Big[1 - \frac{|G_2'-G_2|}{\tau} + \frac{(G_2'-G_2)^2}{\tau^2} +
    \mathcal O\Big(|G_2'-G_2|^3\Big)\Big] \\ & = \mathbf E\Big[1 -
    \frac{|G_2'-G_2|}{\tau}+ \mathcal O\Big(|G_2'-G_2|^3\Big)\Big] +
    \frac{2\mathbf{Var}[G_2]}{\tau^2}.
  \end{align*}
  Hence, the first result concerning $\mathbf{Var}[\mathcal N_2(t)]$
  follows directly from Theorem~\ref{T1}. For the variance of
  proteins,
  \begin{align*}
    \mathbf E\Big[ & e^{-\frac{|G_3-G_3' + G_2 - G_2'|}{\tau}}\Big] \\
    & = \mathbf E\Big[1 - \frac{|G_3 - G_3' + G_2-G_2'|}{\tau} +
    \frac{(G_3-G_3'+G_2-G_2')^2}{\tau^2} \\ & \qquad \qquad \qquad
    \qquad \qquad \qquad \qquad \qquad + \mathcal O\Big(|G_2-G_2'|^3,
    |G_3-G_3'|^3\Big)\Big] \\ & = \mathbf E\Big[1 - \frac{|G_3 - G_3'
      + G_2-G_2'|}{\tau}\Big] + \frac{2(\mathbf{Var}[G_2] +
      \mathbf{Var}[G_3])}{\tau^2} + \mathcal O(\delta_2, \delta_3).
  \end{align*}
  So, denoting the values of $A$ and $B$ from \eqref{eq:T23} with
  deterministic $G_2$ and $G_3$ by $A_0$ and $B_0$, respectively
  (compare with Remark~\ref{rem:convex2}), \eqref{eq:T23} gives
  \begin{align*}
    A_0 - A & = \frac{2\mathbf{Var}[G_3]}{(\tau_2 + \tau_3)\tau_3},\\
    B_0 - B & = 
    \frac{\tau_1}{(\tau_1 + \tau_3)(\tau_1 - \tau_3)}
    \Big(\frac{2\mathbf{Var}[G_2 + G_3]\tau_1}{(\tau_1 + \tau_2)\tau_2}
    -\frac{2\mathbf{Var}[G_2 + G_3]\tau_3}{(\tau_2 + \tau_3)\tau_2}\Big)
    + \mathcal O(\delta_2, \delta_3)
    \\ & = 
    \frac{2\mathbf{Var}[G_2 + G_3]\tau_1}{(\tau_1 + \tau_2)(\tau_1 +
      \tau_3)(\tau_2 + \tau_3)} + \mathcal O(\delta_2, \delta_3).
  \end{align*}
  Hence, if $\mathbf{Var}$ is the variance of $\mathcal
  N_3(t)$ for deterministic $G_2, G_3$,
  \begin{align*}
    \mathbf{Var}[\mathcal N_3(t)] & = \mathbf{Var} -2 \Big(
    \lambda_1^+\lambda_2 \lambda_3^2 \tau_1 \tau_2 \tau_3
    \frac{\mathbf{Var}[G_3]}{\tau_2 + \tau_3} \\ & \qquad \qquad +
    \lambda_1^+ \lambda_1^- \lambda_2^2 \lambda_3^2\tau_1^3 \tau_2^2
    \tau_3^2\frac{\mathbf{Var}[G_2 + G_3]}{(\tau_1 + \tau_2)(\tau_1 +
      \tau_3)(\tau_2 + \tau_3)}\Big) + \mathcal O(\delta_2, \delta_3)
  \end{align*}
  and the result follows.
\end{proof}

\section{Discussion}
\label{S:discuss}
Our main results, Theorems~\ref{T1} and \ref{T2} give precise formulas
on the first two moments of the number of RNA and protein in
equilibrium for the delay model considered in~\eqref{eq:model}. We
show that the expectation is not influenced by the delay but the
variance tends to be highest without delay. As seen in Theoren
\ref{T1} the variance $\mathbf{Var}[\mathcal N_2(t)]$ can be
decomposed into the sum of
\begin{itemize}
\item $\mathbf{E}[\mathbf{Var}[\mathcal N_2(t)|\mathcal
  N_1]] = \mathbf E[\mathcal N_2(t)] = \lambda_1^+\lambda_2 \tau_1\tau_2$
\item $\mathbf{Var}[\mathbf{E}[\mathcal N_2(t)|\mathcal N_1]] =
  \lambda_1^+\lambda_1^-\lambda_2^2\tau_1^2\tau_2^2\cdot C$
\end{itemize}
with $C$ seen from \eqref{eq:T12}. Here the first term can be
interpreted as individual RNA-part and the second term as noise due to
gene-activation-part. A similar decomposition also holds for the
variance of protein number into (compare with
\citealp{bowsher2012identifying})
\begin{itemize}
\item $\mathbf E[\mathbf{Var}[\mathcal N_3(t)| \mathcal N_1,
  \mathcal N_2] ]= \lambda_1^+\lambda_2\lambda_3\tau_1 \tau_2\tau_3$
\item $\mathbf E[\mathbf{Var}[\mathbf E[\mathcal N_3(t)| \mathcal
  N_1, \mathcal N_2]| \mathcal N_1] ]= \lambda_1^+\lambda_2
  \lambda_3^2 \tau_1 \tau_2 \tau_3^2 \cdot A$
\item $\mathbf{Var}[\mathbf E[\mathcal N_3(t)| \mathcal N_1]]
  =\lambda_1^+ \lambda_1^- \lambda_2^2 \lambda_3^2 \tau_1^2 \tau_2^2
  \tau_3^2 \cdot B$,
\end{itemize}
where $A$ and $B$ are described in \eqref{eq:T22} and \eqref{eq:T23},
respectively. These parts mirror the contribution of individual
protein-noise, individual RNA-noise and noise caused by gene
activation to the variance in protein number, respectively.  Such a
variance decomposition is well-known for the model without delay
\citep{Paulsson2005} and is helpful in understanding the different
kind of effects. Our results on this variance decomposition, togehter
with the concrete formulas for $A$ and $B$ seem to be the first
analytical solution of the delay model from \eqref{eq:model} for gene
expression.

~

It has been known for a long time that production of proteins (within
a system of active and deactive genes) comes in bursts. Although our
results on the first two moments give only a first impression about
this burst-like behavior, the connection is only indirect. We rely on
the intuition that a higher variance is compatible with a more
burst-like behavior of protein.  With this interpretation, we find
that burst-like behavior is highest without delay. This result can
also be explained intuitively: Delay weakens the discrete transitions
between the state of gene or the number of RNA
respectively. Consequently the RNA and protein expression patterns
tend to be less bursty given delay.

~

Analytical approaches for biochemical systems frequently utilize the
Master equation, available for any Markov process. Since delays lead
to non-Markovian processes, this technique has to be adapted in order
to capture delays. One way out -- e.g.\ carried out in
\cite{Bratsun:2005:Proc-Natl-Acad-Sci-U-S-A:16199522} and
\cite{Tian2007} -- is to use independence of two-point probabilities
in the Master equation in order to have a closed system of delay
differential equations. However, note that the resulting description
is not precise, whereas the model equations \eqref{eq:model1} provide
an exact description of delay stochastic systems; compare also with
the approach from \cite{Anderson:2007:J-Chem-Phys:18067349}.

~

Delays have been considered for gene expression in equations for
protein degradation and feedback in
\cite{Bratsun:2005:Proc-Natl-Acad-Sci-U-S-A:16199522} by lumping
transcription and translation into a single process. Protein
degradation is a process involving complex proteolytic pathways and a
cellular degradation machinery, leading to several delays; see also
\cite{Robert2013}. Moreover, transcription via elongation is known to
produce delays which are able to explain oscillatory behavior of
feedback systems
\citep{Monk:2003:Curr-Biol:12932324,Roussel:2006:Phys-Biol:17200603}
Interestingly, \cite{Bratsun:2005:Proc-Natl-Acad-Sci-U-S-A:16199522}
find examples with feedback where the stochastic system is oscillatory
even if the corresponding deterministic system is
non-oscillatory. Such oscillatory behavior was also modeled for the
expression levels of both RNA and protein of the Notch effector Hes1
by \cite{Barrio:2006:PLoS-Comput-Biol:16965175}.

~

A large part of theoretical work on delay models is dealing with
simulation schemes for delay stochastic equations; see the review by
\cite{Ribeiro:2010:Math-Biosci:19883665}.
\cite{Bratsun:2005:Proc-Natl-Acad-Sci-U-S-A:16199522} extend the
classical SSA method of \cite{Gillespie1977} (which ignores delay) by
keeping a list of reactions, which were initiated but finish only
later. Another approach is to allow for memory reactions and memory
species as used in \cite{Tian:2013:PLoS-One:23349679}.  According to
\cite{Barrio:2006:PLoS-Comput-Biol:16965175}, delay reactions must be
decomposed into consuming and non-consuming reactions. The reactants
in an unfinished nonconsuming reaction can already participate in a
new reaction, while they cannot participate in a consuming
reaction. (An example of the former is a new initiation of
transcription by binding of RNA polymerase can happen although the
last transcription is not finished yet.) In this sense, our model
\eqref{eq:model} uses consuming reactions. These simulation schemes
have been improved by schemes using a smaller number of random
variables by \cite{Cai:2007:J-Chem-Phys:17411109} and
\cite{Anderson:2007:J-Chem-Phys:18067349}. The latter approach uses
the Poisson process representation of delay models, much in the spirit
of our paper; compare with \eqref{eq:model1}.

~

Today, it is known that variances of protein numbers for the
expression of a multitude of genes is mainly based on RNA flucations
\citep{BarEven2006}. Clearly, this variance in protein number is also
affected by delays. Hence, we have to know the variances in delay
times for practical purposes, if we want to compare theoretical and
empirical fluctuations in protein numbers. On the empirical side,
measurements of delay time variances will be most important for
understanding delay on gene regulation. On the theoretical side, an
important extension of our theory would be to incorporate
self-regulatory mechanisms. Note that several authors found that
delays in such systems can lead to oscillatory behavior or even more
bursty behavior \citep{Monk:2003:Curr-Biol:12932324,zavala2014delays,
  Bratsun:2005:Proc-Natl-Acad-Sci-U-S-A:16199522,zavala2014delays}.
Describing such feedbacks using point processes will be a more
thorough understanding of delays in gene regulation.



\begin{appendix}
  \bigskip
  \bigskip
  \noindent {\huge \bf Appendix}
  \section{A key lemma}
  Within this section, we summarize some frequently used computations
  in the following lemma.

  \begin{lemma}
    Let \label{l:key} $T,T'$ be independent exponentially distributed
    with expectation $\sigma$, and $G, G'\geq 0$ be independent and
    identically distributed. Then, for $t\in \mathbb R, \tau>0$
    \begin{align}
      \label{eq:key1}
      & \mathbf E\Big[e^{-\frac{|t+T+G-T'-G'|}{\tau}}\Big] =
      \frac{\tau}{\tau+\sigma} \frac{\tau \mathbf
        E\Big[e^{-\frac{|G'-G-t|}{\tau}}\Big] - \sigma\mathbf
        E\Big[e^{-\frac{|G'-G-t|}{\sigma}}\Big]}{\tau-\sigma}.
      \intertext{Moreover, if $S$ has a symmetric distribution,
        i.e. $S\stackrel d = -S$,}
      \label{eq:key2} 
      & \mathbf E\Big[e^{-\frac{||T+G-T'-G'|-S|}{\tau}}\Big] = \mathbf
      E\Big[e^{-\frac{|T+G-T'-G'+S|}{\tau}}\Big] \\ & \qquad \qquad
      \qquad \qquad \qquad \notag = \frac{\tau}{\tau+\sigma}
      \frac{\tau \mathbf E\Big[e^{-\frac{|G'-G + S|}{\tau}}\Big] -
        \sigma \mathbf E\Big[e^{-\frac{|G'-G +
            S|}{\sigma}}\Big]}{\tau-\sigma}.
    \end{align}
  \end{lemma}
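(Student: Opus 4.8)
The plan is to reduce everything to a single convolution computation underlying \eqref{eq:key1}, and then to obtain \eqref{eq:key2} from it purely by symmetry. The starting observation is that the difference $D := T - T'$ of two independent $\exp(1/\sigma)$ variables has the Laplace density $x \mapsto \frac{1}{2\sigma} e^{-|x|/\sigma}$ on $\mathbb R$. Writing $W := t + G - G'$, which is independent of $D$, the left-hand side of \eqref{eq:key1} becomes $\mathbf E[e^{-|D + W|/\tau}]$, so I would first condition on $W$ and evaluate the inner function
\begin{equation*}
  g(w) := \mathbf E\big[e^{-|D+w|/\tau}\big] = \frac{1}{2\sigma} \int_{-\infty}^{\infty} e^{-|x+w|/\tau} e^{-|x|/\sigma}\, dx.
\end{equation*}

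Since $g(w) = g(-w)$ by the symmetry of the Laplace density, it suffices to compute $g$ for $w \geq 0$. I would split the line at $x = -w$ and $x = 0$, on each of the three pieces replacing the two absolute values by signed linear expressions, and integrate the resulting pure exponentials. Collecting the contributions and using $e^{-w(1/\sigma - 1/\tau)} = e^{-w/\sigma} e^{w/\tau}$ to cancel the cross term, the answer organizes itself into
\begin{equation*}
  g(w) = \frac{\tau}{\tau + \sigma}\,\frac{\tau e^{-|w|/\tau} - \sigma e^{-|w|/\sigma}}{\tau - \sigma}.
\end{equation*}
Taking the expectation over $W$ and noting $|t + G - G'| = |G' - G - t|$ then yields \eqref{eq:key1} at once; the apparent singularity at $\tau = \sigma$ is removable and is handled by the limiting value, consistent with the boundedness remarked on for the variance formulas.

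For \eqref{eq:key2}, set $V := T + G - T' - G'$, which is symmetric ($V \stackrel d = -V$) since $T - T'$ and $G - G'$ are each symmetric differences of i.i.d.\ variables. For the first equality I would condition on $|V|$ and on the sign of $V$, which is $\pm 1$ with equal probability given $|V|$; the two sign branches contribute $\tfrac12\big(e^{-\,||V| + S|/\tau} + e^{-\,||V| - S|/\tau}\big)$. Invoking $S \stackrel d = -S$ gives $\mathbf E[e^{-\,||V| + S|/\tau}] = \mathbf E[e^{-\,||V| - S|/\tau}]$, so both branches collapse to $\mathbf E[e^{-\,||V| - S|/\tau}]$, which is the left-hand side. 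For the second equality I would condition on $S = s$ and apply \eqref{eq:key1} with the deterministic $t$ replaced by $s$; taking the expectation back over $S$ and using $S \stackrel d = -S$ once more to turn $|G' - G - S|$ into $|G' - G + S|$ produces the stated formula.

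The only genuinely computational step is the evaluation of $g(w)$: the bookkeeping of the three integration regions and the cancellation that leaves exactly the two decaying exponentials $e^{-|w|/\tau}$ and $e^{-|w|/\sigma}$ is where an error is most likely to creep in, and it is the single place where the memoryless (exponential) nature of $T$ and $T'$ is actually used. Everything else is a symmetry argument or the substitution of the random variable $S$ for the parameter $t$, and requires no new calculation.
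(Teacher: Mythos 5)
Your proposal is correct, and for the main identity \eqref{eq:key1} it takes a genuinely different route from the paper. The paper conditions on $G-G'$ and then performs a case analysis on the signs of $G'-G-t$ and of $T-(T'+G'-G-t)$, repeatedly invoking the memoryless property of the exponential distribution to identify conditional overshoots as fresh $\exp(1/\sigma)$ variables; this produces the same two-exponential expression after recombining four sub-cases. You instead observe that $D=T-T'$ has the Laplace density $\tfrac{1}{2\sigma}e^{-|x|/\sigma}$ and compute $g(w)=\mathbf E[e^{-|D+w|/\tau}]$ as a single three-piece convolution integral; I checked the three regional contributions, namely $\tfrac{\sigma\tau}{\sigma+\tau}e^{-w/\sigma}$, $\tfrac{\sigma\tau}{\tau-\sigma}(e^{-w/\tau}-e^{-w/\sigma})$ and $\tfrac{\sigma\tau}{\sigma+\tau}e^{-w/\tau}$, and after multiplying by $\tfrac{1}{2\sigma}$ they do collapse to the claimed
\begin{equation*}
  g(w)=\frac{\tau}{\tau+\sigma}\,\frac{\tau e^{-|w|/\tau}-\sigma e^{-|w|/\sigma}}{\tau-\sigma},
\end{equation*}
so averaging over $W=t+G-G'$ and noting $|W|=|G'-G-t|$ gives \eqref{eq:key1}. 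Your approach buys a shorter, more mechanical calculation in which the exponential assumption enters only through the explicit density rather than through memorylessness, and it makes the symmetry $g(w)=g(-w)$ transparent; the paper's argument is longer but avoids writing down the density of $T-T'$. For \eqref{eq:key2} your treatment coincides in substance with the paper's: the first equality is the same symmetry argument (your version, conditioning on $|V|$ and the sign of $V$, is if anything cleaner than the paper's slightly terse display), and the second equality is obtained exactly as in the paper by conditioning on $S$ and applying \eqref{eq:key1}. The only caveat, which applies equally to the paper, is that one must use that $S$ is independent of $T,T',G,G'$, which the lemma statement leaves implicit.
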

  
  \begin{remark}
    Note that in \eqref{eq:key1} and \eqref{eq:key2} all left hand
    sides include the exponentially distributed random variables $T,
    T'$, while the right hand sides only depend on the distribution of
    $G, G'$ (and $S$).
  \end{remark}
  
  \begin{proof}
    We start with the proof of \eqref{eq:key1}.  First,
    \begin{align} \label{eq:key4} \mathbf E\Big[ &
      e^{-\frac{|t+T+G-T'-G'|}{\tau}}\Big] = \mathbf E\Big[\mathbf
      E\big[e^{-\frac{t+T+G-T'-G'}{\tau}}1_{T>T'+G'-G-t}\big|G-G'\big]\Big]
      \\ & \qquad \qquad \qquad \qquad \qquad + \mathbf E\Big[\mathbf
      E\big[e^{-\frac{T'+G'-T-G-t}{\tau}}1_{T'\geq
        T+G-G'+t}\big|G-G'\big]\Big] \notag
    \end{align}
    Then,
    \begin{align}\label{eq:keycomb}
      \mathbf E & \big[e^{-\frac{t+T+G-T'-G'}{\tau}}
      1_{T>T'+G'-G-t}\big|G-G'\big] \\ & \notag = 1_{G'-G-t \geq 0}
      \mathbf
      E\big[e^{-\frac{t+T+G-T'-G'}{\tau}}1_{T>T'+G'-G-t}\big|G-G'\big]
      \\ \notag & \qquad \qquad \qquad + 1_{G'-G-t < 0} \mathbf
      E\big[e^{-\frac{t+T+G-T'-G'}{\tau}} \big(1_{T'<G-G'+ t} \\ &\notag
      \qquad \qquad \qquad \qquad \qquad \qquad \qquad + 1_{T'\geq
        G-G'+t}1_{T>T'+G'-G-t}\big) \big|G-G'\big].
    \end{align}
    Now, on the set $\{G'- G-t\geq 0\}$,
    \begin{align*}
      \mathbf E\big[&
      e^{-\frac{t+T+G-T'-G'}{\tau}}1_{T>T'+G'-G-t}\big|G-G'\big] = \frac
      12 e^{-\frac{G'-G-t}{\sigma}} \mathbf
      E\big[e^{-\frac{T}{\tau}}\big] \\ & = \frac 12
      \frac{\tau}{\sigma+\tau}e^{-\frac{G'-G-t}{\sigma}}
    \end{align*}
    since $\mathbf P[T>T'+G'-G-t|G - G'] = \tfrac 12
    e^{-\frac{G'-G-t}{\sigma}}$ and given $T>T'+G'-G - t$ and
    conditioned on $G-G'$, the random variable $t + T+G-T'-G'$ is
    again exponentially distributed with expectation $\sigma$. Next,
    on $\{G'-G-t<0\}$
    \begin{align*}
      \mathbf E \big[ e^{-\frac{t+T+G-T'-G'}{\tau}} & 1_{T'<G-G'+t}
      \big|G-G'\big] \\ & = \int_0^\infty \int_0^{G-G'+t}
      \frac{1}{\sigma^2}e^{-\frac{s'+s}{\sigma}}
      e^{-\frac{t+s+G-s'-G'}{\tau}} ds' ds \\ & =
      \frac{1}{\sigma^2}e^{-\frac{G-G'+t}{\tau}} \int_0^\infty
      e^{-s\big(\frac{1}{\sigma}+\frac{1}{\tau}\big)}ds
      \int_0^{G-G'+t}
      e^{-s'\big(\frac{1}{\sigma}-\frac{1}{\tau}\big)}ds' \\ & =
      \frac{1/\sigma}{1/\sigma + 1/\tau}\frac{1/\sigma}{1/\sigma -
        1/\tau} e^{-\frac{G-G'+t}{\tau}} \big(1 -
      e^{-(G-G'+t)\big(\frac{1}{\sigma}-\frac{1}{\tau}\big)}\big) \\ &
      = \frac{\tau^2}{(\tau+\sigma)(\tau-\sigma)}
      \big(e^{-\frac{G-G'+t}{\tau}} - e^{-\frac{G-G'+t}{\sigma}} \big)
    \end{align*}
    as well as
    \begin{align*}
      \mathbf E & \big[ e^{-\frac{t+T+G-T'-G'}{\tau}} 1_{T'\geq
        G-G'+t}1_{T>T'+G'-G-t}\big|G-G'\big] \\ & =
      e^{-\frac{t+G-G'}{\sigma}} \mathbf
      E\big[e^{-\frac{T-T'}{\tau}}1_{T>T'}|G-G'\big] = \frac 12
      e^{-\frac{t+G-G'}{\sigma}}\frac{\tau}{\sigma+\tau}
    \end{align*}
    since $\mathbf P[T'>G-G'+t|G-G'] = e^{-\frac{t+G-G'}{\sigma}}$ and
    given $T'>G-G'+t$, the random variable $T'+G'-G-t$ is again
    exp$(\sigma)$ distributed. Plugging the last three computations
    into~\eqref{eq:keycomb},
    \begin{align*}
      \mathbf E & \big[e^{-\frac{t+T+G-T'-G'}{\tau}}
      1_{T>T'+G'-G-t}\big|G-G'\big] \\ & = \frac 12 \frac{\tau}{\sigma
        + \tau} e^{-\frac{|G'-G-t|}{\tau}} + 1_{G'-G-t<0}
      \frac{\tau^2}{(\tau+\sigma)(\tau-\sigma)}
      \notag\big(e^{-\frac{G-G'+t}{\tau}} - e^{-\frac{G-G'+t}{\sigma}}
      \big).
    \end{align*}
    Then, by symmetry of $G-G'$, from~\eqref{eq:key4},
    \begin{align*}
      \mathbf E & \big[e^{-\frac{|t+T+G-T'-G'|}{\tau}}\big] \\ & =
      \mathbf E\Big[\frac 12 \frac{\tau}{\sigma + \tau}
      e^{-\frac{|G'-G-t|}{\sigma}} + 1_{G'-G-t<0}
      \frac{\tau^2}{(\tau+\sigma)(\tau-\sigma)}
      \big(e^{-\frac{G-G'+t}{\tau}} - e^{-\frac{G-G'+t}{\sigma}}
      \big)\Big] \\ & \quad + \mathbf E\Big[\frac 12
      \frac{\tau}{\sigma + \tau} e^{-\frac{|G-G'+t|}{\sigma}} +
      1_{G-G'+t<0} \frac{\tau^2}{(\tau+\sigma)(\tau-\sigma)}
      \big(e^{-\frac{G'-G-t}{\tau}} - e^{-\frac{G'-G-t}{\sigma}}
      \big)\Big] \\ & = \frac{\tau}{\sigma + \tau} \mathbf
      E\Big[e^{-\frac{|G'-G-t|}{\sigma}}\Big] +
      \frac{\tau^2}{(\tau+\sigma)(\tau-\sigma)} \mathbf
      E\Big[e^{-\frac{|G'-G-t|}{\tau}} -
      e^{-\frac{|G'-G-t|}{\sigma}}\Big] \\ & =
      \frac{\tau}{\tau+\sigma} \frac{\tau \mathbf
        E\Big[e^{-\frac{|G'-G-t|}{\tau}}\Big] -\sigma \mathbf
        E\Big[e^{-\frac{|G'-G-t|}{\sigma}}\Big]}{\tau-\sigma}.
    \end{align*}
    For, \eqref{eq:key2}, we only need to show the first equality
    since the second follows from \eqref{eq:key1} by conditioning on
    $S$. Using the symmetry of $S$ we can write
    \begin{align*}
      \mathbf E\Big[e^{-\frac{||T+G-T'-G'|-S|}{\tau}}\Big] & = \mathbf
      E\Big[e^{-\frac{|T+G-T'-G'-S|}{\tau}}, T+G-T'-G'\geq 0\Big] \\ &
      \qquad \qquad \qquad + \mathbf
      E\Big[e^{-\frac{|T'+G'-T-G+S|}{\tau}}, T+G-T'-G'< 0\Big] \\ & =
      \mathbf E\Big[e^{-\frac{|T+G-T'-G'-S|}{\tau}}\Big]
    \end{align*}
    which shows the first equality in \eqref{eq:key2}.
  \end{proof}
\end{appendix}

\subsection*{Acknowledgments}
We thank Bence Melykuti for helpful comments on the manuscript.

\bibliographystyle{chicago}
\bibliography{expression}

\end{document}